\newcommand{\Rmnum}[1]{\expandafter\@slowromancap\romannumeral #1@}
\newtheorem{proposition}{\bf{Proposition}}
\begin{document}
\bibliographystyle{ieeetr}

\title{ Hybrid Precoding Based on Non-Uniform Quantization Codebook to Reduce Feedback Overhead in Millimeter Wave MIMO Systems}

\author{Yun~Chen,~Da~Chen,~and~Tao~Jiang
\thanks{Y.~Chen,~D.~Chen,~and~T.~Jiang are with School of Electronic Information and Communications, Huazhong University of Science and Technology, Wuhan 430074, China (e-mail: chen\_yun@hust.edu.cn; chenda@hust.edu.cn; tao.jiang@ieee.org).}}%

\markboth{Submitted to IEEE Transactions on Communications ,~Vol.~, No.~, 2018}%
{Chen \MakeLowercase{\textit{et al.}}: Hybrid precoding based on NUQ codebook to Reduce Feedback Overhead in millimeter wave MIMO systems} 

\maketitle
\begin{abstract}
In this paper, we focus on the design of the hybrid analog/digital precoding in millimeter wave multiple-input multiple-output (MIMO) systems. To reduce the feedback overhead, we propose two non-uniform quantization (NUQ) codebook based hybrid precoding schemes for two main hybrid precoding implementations, i.e., the full-connected structure and the sub-connected structure. Specifically, we firstly group the angles of the arrive/departure (AOAs/AODs) of the scattering paths into several spatial lobes by exploiting the sparseness property of the millimeter wave in the angular domain, which divides the total angular domain into effective spatial lobes' coverage angles and ineffective coverage angles. Then, we map the quantization bits non-uniformly to different coverage angles and construct NUQ codebooks, where high numbers of quantization bits are employed for the effective coverage angles to quantize AoAs/AoDs and zero quantization bit is employed for ineffective coverage angles. Finally, two low-complexity hybrid analog/digital precoding schemes are proposed based on NUQ codebooks. Simulation results demonstrate that, the proposed two NUQ codebook based hybrid precoding schemes achieve near-optimal spectral efficiencies and show the superiority in reducing the feedback overhead compared with the uniform quantization (UQ) codebook based works, e.g., at least $12.5\%$ feedback overhead could be reduced for a system with $144/36$ transmitting/receiving antennas.
\end{abstract}

\begin{IEEEkeywords}
Millimeter wave communication, hybrid precoding, feedback overhead, spatial lobe, non-uniform quantization.
\end{IEEEkeywords}

\section{Introduction}
\IEEEPARstart{T}{he} millimeter wave communication could achieve high data rates and has caused widespread concern owing to the large bandwidth \cite{1Rappaport2013, 2pi2011, 3Bai2014, 4Heath2016, 5Ayach2013}. However, millimeter wave signals suffer severe path-loss due to the use of very high carrier frequency on the order of 30-300 GHz. Fortunately, the small wavelength of millimeter wave signals enables the deployment of large antenna arrays in small physical dimensions and make massive multiple-input multiple-output (MIMO) practical in wireless communications \cite{6Dai2016, 7Alkhateeb2014}. The large antenna arrays could provide sufficient antenna gains to compensate for the high path loss of millimeter wave signals. Therefore, millimeter wave MIMO has been a promising candidate for future cellular networks.

The precoding is an important technology in MIMO systems since it could be utilized to transmit multiple data streams and cancel the interferences between different data streams. In millimeter wave MIMO systems, the precoding is essential and could further improve the spectral efficiency. However, the precoding in traditional MIMO systems is typically realized in the digital domain and requires expensive radio frequency (RF) chains comparable in number to the antennas, which will greatly increase the hardware cost of the millimeter wave MIMO system equipped with large antenna arrays \cite{8Jin2016, 8Jin2012}.
To address the above issue, the hybrid analog/digital precoding was proposed, where the number of the RF chains is much less than the number of antennas \cite{9zhang2005}. In the hybrid precoding, the signals are firstly precoded by a low-dimensional digital precoder to cancel the interference and allocate power, then, precoded by a high-dimensional analog precoder to produce high antenna gains.

Hybrid precoding has two main implementation structures, i.e., full-connected structure and sub-connected structure \cite{6Han2015}. 
The full-connected structure utilizes a large number of phase shifters, where each RF chain is connected to all antennas to obtain full precoding gains \cite{5Ayach2013}. The sub-connected structure sacrifices some precoding gains, where each RF chain is only connected to a subset of antennas or a subarray to reduce the required number of phase shifters \cite{16Gao2016}. In the full-connected structure, the orthogonal matching pursuit (OMP) based hybrid precoding algorithm was the first scheme proposed for millimeter wave MIMO systems to obtain the near-optimal spectral efficiency \cite{5Ayach2013}. Inspired by \cite{5Ayach2013}, there are many papers devoted to designing hybrid precoding algorithms mainly based on the alternative minimization, matrix decomposition and iterative searching\cite{11Ni2016, 12Yu2016, 13Chen2015, 14Rusu2016, 14-15Dai2016}. In the sub-connected structure, the successive interference cancelation (SIC) based hybrid precoding scheme was the first scheme proposed to obtain the near-optimal spectral efficiency\cite{16Gao2016}. In \cite{17Park2017}, a near-optimal closed-form solution was proposed for the sub-connected structure. Though the above methods proposed in \cite{11Ni2016, 12Yu2016, 13Chen2015, 14Rusu2016, 14-15Dai2016, 16Gao2016, 17Park2017} could achieve good spectral efficiency, the feedback overhead issue was not considered.

For millimeter wave MIMO systems with large antenna arrays, the feedback overhead is very large and seriously affects the communication efficiency. Therefore, it is of great importance to design hybrid precoding schemes and facilitate the limited feedback \cite{5Ayach2013}. One of the effective solutions to the feedback problem is to quantize the analog precoding matrices. When the receiver obtains analog precoding matrices of which each column is selected from predefined quantization codebooks, it only needs to feed back the selected indexes rather than large dimensional precoding matrices to the transmitter. Most prior hybrid precoding schemes designed for limited feedback millimeter wave MIMO systems were based on the uniform quantization (UQ) codebooks that only depend on the single parameter (the angle of arrive/departure (AOA/AOD)) quantization and uniformly divide the total angular domain into $2^b$ ($b$ is the number of quantization bits) quantized angle parts \cite{18Alkhateeb2015, 19Lin2017, 20Alkhateeb2014, 15Singh2015}.
However, the sparseness property of the millimeter wave in the angular domain has not been fully utilized in existing works to reduce the feedback overhead.

In this paper, we focus on the design of hybrid precoding for both full-connected and sub-connected structures to reduce the feedback overhead in millimeter wave MIMO systems. The key idea is to construct novel non-uniform quantization (NUQ) codebooks that map the quantization bits non-uniformly for different coverage angles.
According to the measurement results of NYU WIRELESS \cite{21Samimi2014, 22Samimi2016, 23Rappaport2015}, the AOAs/AODs of the paths in the millimeter wave channel could be grouped in several separated spatial lobes (SLs), which enables us to divide the total angular domain into effective spatial lobes' coverage angles and ineffective coverage angles. Therefore, we employ high numbers of quantization bits to quantize both AoAs and AoDs for effective spatial lobes' coverage angles to obtain the high spectral efficiency and employ zero quantization bit for ineffective coverage angles to make the total feedback overhead lower. The main contributions of this paper are summarized as follows.
\begin{itemize}
  \item By utilizing the sparseness property of the millimeter wave in the angular domain, we construct novel NUQ codebooks and propose a low-complexity NUQ-based hybrid precoding scheme for the full-connected structure, named NUQ-HYP-Full. This scheme requires smaller feedback overhead than the UQ codebooks based schemes to maintain the near-optimal spectral efficiency. Moreover, we prove that narrowing the angle range that needs to be quantized is equivalent to increasing the quantization accuracy. When the quantization accuracy is not high enough, increasing the quantization accuracy means improving the spectral efficiency.
  \item We propose NUQ codebooks and a NUQ-based hybrid precodng scheme for the sub-connected structure, named NUQ-HYP-Sub. This scheme is the first one that utilizes beamsteering based quantization codebooks to design hybrid precoding matrices and greatly reduces the feedback overhead compared with the existing hybrid precoding schemes in the sub-connected structure.
  \item We observe that the required number of quantization bits in the sub-connected structure is smaller than that in the full-connected structure to obtain near-optimal spectral efficiencies (more than $99\%$ of the spectral efficiencies achieved by the corresponding optimal unconstrained precoding schemes). Besides the spectral efficiency and power consumption, this observation provides a new insight when comparing full-connected and sub-connected hybrid precoding implementations.
\end{itemize}

Simulation results demonstrate that the proposed NUQ-HYP-Full scheme outperforms the UQ based OMP scheme and achieves similar spectral efficiency as the fully digital precoding scheme (whose spectral efficiency is the upper bound for the full-connected structure). Moreover, the proposed NUQ-HYP-Sub scheme achieves similar spectral efficiency as the SIC based hybrid precoding scheme and the optimal hybrid precoding scheme for the sub-connected structure.

The rest of the paper is organized as follows. In Section II, the system model, channel model and the problem formulation are described. The non-quantization codebooks and the corresponding NUQ codebook based hybrid precoding schemes for the full-connected and sub-connected structures are demonstrated in Section III and Section IV, respectively.  Simulation results are presented in Section V. Finally, we conclude this paper in Section VI.

We use the following notations in this paper. $ a$ is a scalar, $\bf{a}$ is a vector, $\bf{A}$ is a matrix and ${\cal A}$ is a set. ${{\bf{A}}^{(i)}}$ is the $i_{th}$ column of $\bf{A}$ and ${\left\| {\bf{A}} \right\|_F}$ is the Frobenius norm of ${\bf{A}}$. ${{\bf{A}}^T},{{\bf{A}}^*},{{\bf{A}}^{ - 1}}$ denote the transpose, conjugate transpose and inverse of ${\bf{A}}$ respectively. ${\rm{diag}}({\bf{A}})$ is a vector that consists of diagonal elements of ${\bf{A}}$ and ${\rm{blkdiag}}({\bf{A}},{\bf{B}})$ is the block diagonal concatenation of ${\bf{A}}$ and ${\bf{B}}$. $[{\bf{A}}\left| {\bf{B}} \right.]$ is the horizontal concatenation. $\left| {\bf{a}} \right|$ is the modulus of $\bf{a}$. ${{\bf{I}}_N}$ denotes a
$N \times N$ identity matrix. $\cal L(\bf{a})$ denotes the length of $\bf{a}$. ${\cal C}{\cal N}({\bf{a}},{\bf{A}})$ is a complex Gaussian vector with mean ${\bf{a}}$ and covariance matrix ${\bf{A}}$. $\mathbb{E}[{\bf{A}}]$ is the expectation of ${\bf{A}}$.

\section{System Model, Channel Model and Problem Formulation}
\subsection{System Model}
\begin{figure*}[t]
\centering
\subfigure[]{
\begin{minipage}{0.45\textwidth}
\centering
\includegraphics[scale=0.6]{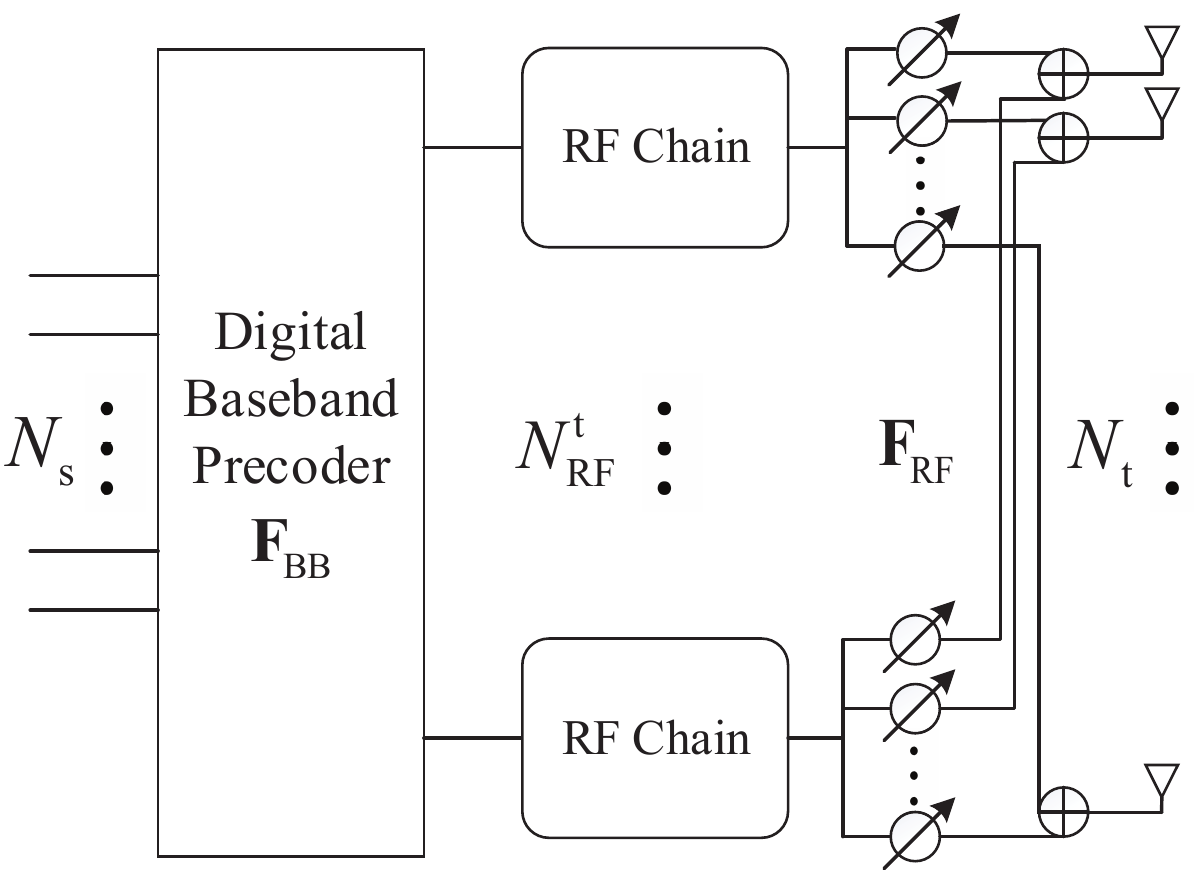}
\end{minipage}
}
\subfigure[]{
\begin{minipage}{0.45\textwidth}
\centering
\includegraphics[scale=0.6]{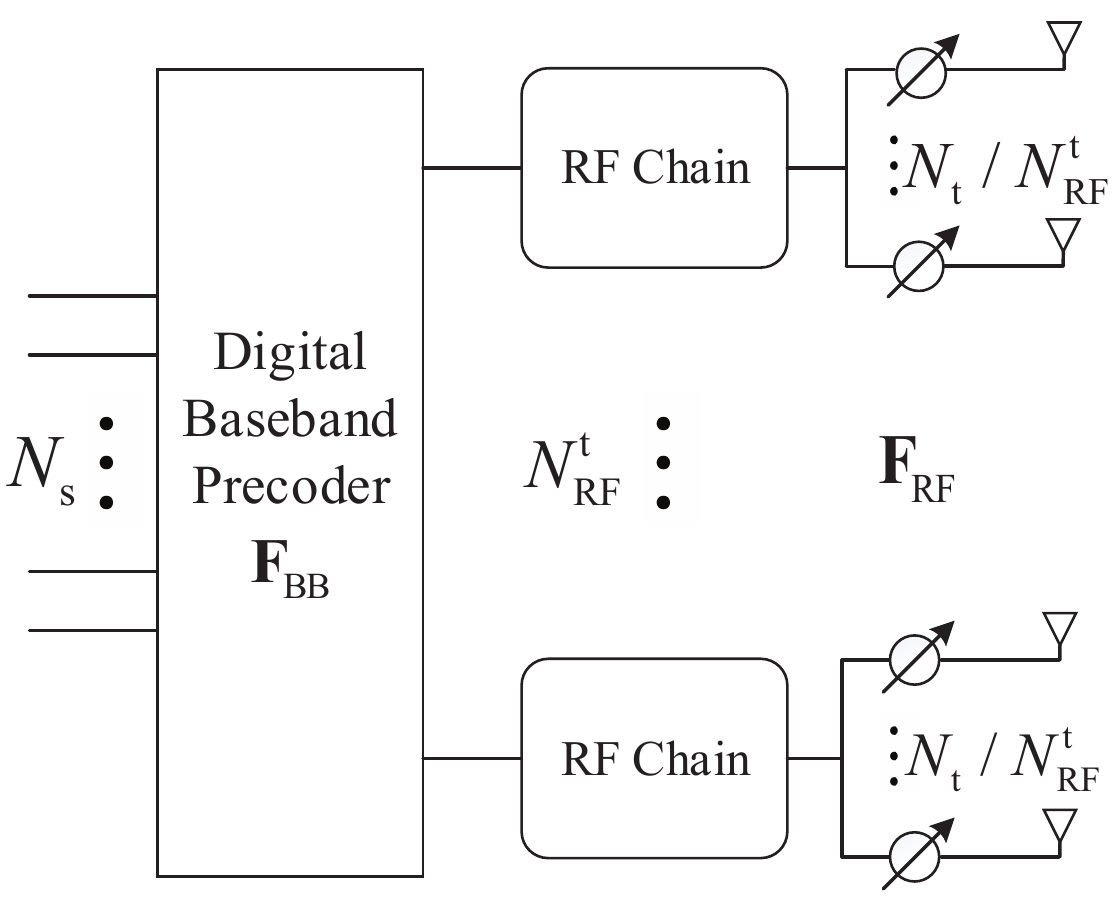}
\end{minipage}
}
\caption{Two block diagrams of the millimeter wave MIMO system models that use the hybrid precoding structure. (a) Full-connected structure, where each RF chain is connected to all antennas; (b) Sub-connected structure, where each RF chain is only connected to a subset of antennas.}
\label{fig1}
\end{figure*}

Consider two typical structures for the single user millimeter wave MIMO systems in Fig. \ref{fig1}, where Fig. \ref{fig1}(a) shows the transmitter of the full-connected structure and Fig. \ref{fig1}(b) shows the transmitter of the  sub-connected structure. In both structures, $N_{\rm{t}}/N_{\rm{r}}$ antennas and $N_{\rm{RF}}^{\rm{t}}/N_{\rm{RF}}^{\rm{r}}$ RF chains are equipped at the transmitter/receiver subject to the constrains $N_{\rm{s}}\leq N_{\rm{RF}}^{\rm{t}}\leq N_{\rm{t}}$ and $N_{\rm{s}}\leq N_{\rm{RF}}^{\rm{r}}\leq N_{\rm{r}}$, where $N_{\rm{s}}$ denotes the number of the data streams.

At the transmitter, $N_{\rm{s}}$ data streams are firstly transmitted to $N_{\rm{RF}}^{\rm{t}}$ RF chains by an $N_{\rm{RF}}^{\rm{t}}\times N_s$ baseband precoding matrix ${{\textbf{F}}_{{\rm{BB}}}}$. Then, an $N_{\rm{t}}\times N_{\rm{RF}}^{\rm{t}}$ analog precoding matrix ${{\textbf{F}}_{{\rm{RF}}}}$ transforms the data streams from RF chains to $N_{\rm{t}}$ antennas.
The discrete-time transmitted signal vector is
\begin{equation}\label{2.1}
\textbf{X}={{\textbf{F}}_{{\rm{RF}}}}{{\textbf{F}}_{{\rm{BB}}}}\textbf{s},
\end{equation}
where \textbf{s} is the $N_{\rm{s}}\times 1$ signal vector with $\mathbb{E}[{\bf{s}}{{\bf{s}}^*}] = \frac{1}{{{N_{\rm{s}}}}}{{\bf{I}}_{{N_{\rm{s}}}}}$. Since phase shifters are utilized to implement ${{\textbf{F}}_{{\rm{RF}}}}$, each entry of ${{\textbf{F}}_{{\rm{RF}}}}$ has constant amplitude constraint ${\left( {{\bf{F}}_{{\rm{RF}}}^{(i)}{\bf{F}}_{{\rm{RF}}}^{(i)*}} \right)_{l,l}} = 1/{N_{\rm{t}}}$, where ${\left(  \cdot  \right)_{l,l}}$ denotes the ${l_{th}}$ diagonal element of a matrix. In addition, the total power constrain is enforced by $\left\| {{{\bf{F}}_{{\rm{RF}}}}{{\bf{F}}_{{\rm{BB}}}}} \right\|_F^2 = {N_{\rm{s}}}$.

The narrowband block-fading channel model is adopted as shown in \cite{5Ayach2013, 20Alkhateeb2014} and the signal vector observed by the receiver is
\begin{equation}\label{2.2}
{\bf{r}} = \sqrt \rho  {\bf{H}}{{\textbf{F}}_{{\rm{RF}}}}{{\textbf{F}}_{{\rm{BB}}}}{\bf{s}} + {\bf{n}},
\end{equation}
where ${\bf{H}}$ is the ${N_{\rm{r}}} \times {N_{\rm{t}}}$ millimeter wave channel matrix, $\rho$ is the average received power, and ${\bf{n}}$ is the vector of independent and identically distributed (i.i.d.) ${\cal {CN}}(0,\sigma _{\rm{n}}^{2})$ noise.

After the combining processing at the receiver, the received signal vector is given as
\begin{equation}\label{2.3}
{\bf{y}} = \sqrt \rho  {{\bf{W}}_{{\rm{BB}}}^*}{{\bf{W}}_{{\rm{RF}}}^*}{\bf{H}}{{\textbf{F}}_{{\rm{RF}}}}{{\textbf{F}}_{{\rm{BB}}}}{\bf{s}} + {{\bf{W}}_{{\rm{BB}}}^*}{{\bf{W}}_{{\rm{RF}}}^*}{\bf{n}},
\end{equation}
where ${{\bf{W}}_{{\rm{RF}}}}$ is the ${N_{\rm{r}}} \times N_{{\rm{RF}}}^{\rm{r}}$ RF combining matrix which should satisfy ${\left( {{\bf{W}}_{{\rm{RF}}}^{(i)}{\bf{W}}_{{\rm{RF}}}^{(i)*}} \right)_{l,l}} = 1/{N_{\rm{r}}}$ and ${{\bf{W}}_{{\rm{BB}}}}$ is the $N_{{\rm{RF}}}^{\rm{r}} \times {N_{\rm{s}}}$ baseband digital combining matrix. When Gaussian symbols are transmitted through the millimeter wave channel, the achievable spectral efficiency can be written as \cite{24Goldsmith2003}
\begin{equation}\label{1.}
\begin{array}{l}
R = {\log _2}\Big (\Big | {{{\bf{I}}_{{N_{\rm{s}}}}}{\bf{ + }}\dfrac{\rho }{{{N_{\rm{s}}}}}{\bf{R}}_{\rm n}^{ - 1}{\bf{W}}_{\mathop{\rm BB}\nolimits}^*{\bf{W}}_{\mathop{\rm RF}\nolimits}^*{\bf{H}}{{\bf{F}}_{{\mathop{\rm RF}\nolimits} }}{{\bf{F}}_{{\mathop{\rm BB}\nolimits} }}}{{\bf{ \times F}}_{\mathop{\rm BB}\nolimits}^*{\bf{F}}_{\mathop{\rm RF}\nolimits}^*{{\bf{H}}^{\bf{*}}}{{\bf{W}}_{{\mathop{\rm RF}\nolimits} }}{{\bf{W}}_{{\mathop{\rm BB}\nolimits} }}} \Big |\Big ),
\end{array}
\end{equation}
where ${{\bf{R}}_{\rm{n}}} = \sigma _{\rm{n}}^2{\bf{W}}_{{\rm{BB}}}^ * {\bf{W}}_{{\rm{RF}}}^ * {{\bf{W}}_{{\rm{RF}}}}{{\bf{W}}_{{\rm{BB}}}}$ is the ${N_{\rm{s}}}\times {N_{\rm{s}}}$ noise covariance matrix after combing.

\subsection{Channel Model}
The high free-space path-loss of the millimeter wave signals leads to limited spatial scattering \cite{5Ayach2013, 26Rappaport2013}. Therefore, the geometric Saleh-Valenzuela model is usually used to represent the millimeter wave channel \cite{27Rappaport2014}, which is given by

\begin{equation}\label{2.4}
{\bf{H}}_{\rm {cl}} = \sqrt {\dfrac{{{N_{\rm{t}}}{N_{\rm{r}}}}}{{N_{\rm{cl}}{N_{\rm{ray}}}}}} \sum\limits_{m=1}^{N_{\rm{cl}}} {\sum\limits_{n=1}^{{N_{\rm{ray}}}}} {{\alpha _{m,n}}}  {{\bf{a}}_{\rm{r}}}(\theta _{m,n}^{\rm r}){{\bf{a}}_{\rm{t}}}(\theta _{m,n}^{\rm t})^*,
\end{equation}
where ${N_{\rm{cl}}}$ is the number of scattering clusters and each cluster contributes ${N_{\rm{ray}}}$ propagation paths, ${\alpha _{m,n}}$ denotes the complex gain of the $n_{th}$ path in the $m_{th}$ cluster, $\theta _{m,n}^{\rm r} \in [0,2\pi ]$ and $\theta _{m,n}^{\rm t} \in [0,2\pi ]$ are the AOA and AOD, respectively. We adopt uniform linear arrays (ULAs) at both the transmitter and the receiver. ${{\bf{a}}_{\rm{r}}}(\theta _{m,n}^{\rm r})$ and ${{\bf{a}}_{\rm{t}}}(\theta _{m,n}^{\rm t})$ are the antenna array response vectors which can be written as
\begin{equation}\label{2.5}
\begin{array}{l}
{{\bf{a}}_{\rm{t}}}(\theta _{m,n}^{\rm t}) = \dfrac{1}{{\sqrt {{N_{\rm t}}} }}\Big[1,{\kern 1pt} {e^{j(2\pi /\lambda )dsin(\theta _{m,n}^{\rm t}{\kern 1pt} )}},..., {e^{j({N_{\rm t}} - 1)(2\pi /\lambda )dsin(\theta _{m,n}^{\rm t}{\kern 1pt} )}}{\Big]^T},
\end{array}
\end{equation}
and
\begin{equation}\label{2.6}
\begin{array}{l}
{{\bf{a}}_{\rm{r}}}(\theta _{m,n}^{\rm r}) = \dfrac{1}{{\sqrt {{N_{\rm r}}} }}\Big[1,{\kern 1pt} {e^{j(2\pi /\lambda )dsin(\theta _{m,n}^{\rm r}{\kern 1pt} )}},..., {e^{j({N_{\rm r}} - 1)(2\pi /\lambda )dsin(\theta _{m,n}^{\rm r}{\kern 1pt} )}}{\Big]^T},
\end{array}
\end{equation}
respectively, where $\lambda$ is the wavelength of the signal and $d = \lambda /2$ denotes the aperture domain sample spacing.

According to the measurement results of NYU WIRELESS for the 28 GHz millimeter wave channel in Manhattan that is a typical urban environment, the  AOAs/AODs of the propagation paths could be grouped in several spatial lobes, which causes the sparseness property of the millimeter wave in the angular domain \cite{21Samimi2014, 22Samimi2016, 23Rappaport2015}. The polar plot of the millimeter wave channel measured in Manhattan at 28 GHz is shown in Fig. 2. It is observed that the angles of paths in the five dominated spatial lobes are sufficiently separated. {Moreover, the measurement results also show that the paths in several clusters may arrive in the same spatial lobe, which means the AOAs/AODs of the paths in different clusters may not be sufficiently separated.} Therefore, we reconstruct the channel from the spatial lobe's perspective to fully utilize the sparseness property of the millimeter wave in the angular domain.
Since the millimeter wave channel consists of several propagation paths, the equivalent spatial lobes millimeter wave channel could be expressed as
\begin{equation}\label{3.12}
\begin{array}{l}
{\bf{H}} = \sqrt {\dfrac{{{N_{\rm t}}{N_{\rm r}}}}{{P{Q}}}} \sum\limits_{p=1}^P {\sum\limits_{q=1}^Q {{\alpha _{p,q}}} } {{\bf{a}}_{\rm r}}({\theta _{p,q}^{\rm r})}{{\bf{a}}_{\rm t}}{(\theta _{p,q}^{\rm t})^*},
\end{array}
\end{equation}
where $P$ is the number of spatial lobes, $Q$ is the number of subpaths in one spatial lobe ($PQ={{N_{\rm{cl}}{N_{\rm{ray}}}}}$), and ${{\alpha _{p,q}}}$ is the path gains for the $q_{th}$ subpath in the $p_{th}$ spatial lobe, which obeys the Rayleigh distribution.

For convenience, the millimeter wave channel is rewritten in a more compact form as
\begin{equation}\label{2.9}
{\bf{H}}{\rm{ = }}{{\bf{A}}_{\rm{r}}}{\rm{diag}}({\boldsymbol{\alpha}}){{\bf{A}}_{\rm{t}}}^*,
\end{equation}
where ${\boldsymbol{\alpha }} = \sqrt {\frac{{{N_{\rm{t}}}{N_{\rm{r}}}}}{{PQ}}} {[{\alpha _{1,1}},{\alpha _{1,2}},...,{\alpha _{1,Q}},...,{\alpha _{{{PQ}}}}]^T}$ contains the complex gains of all paths, and the matrices
\begin{equation}\label{2.10}
\begin{split}
{{\bf{A}}_{\rm{r}}} &=\big[{{\bf{A}}_{\rm{r1}}}, {{\bf{A}}_{\rm{r2}}},..., {{\bf{A}}_{{\rm{r}}P}} \big]\\
&= \big[{{\bf{a}}_{\rm{r}}}(\theta _{1,1}^{\rm{r}}),{{\bf{a}}_{\rm{r}}}(\theta _{1,2}^{\rm{r}}),...,{{\bf{a}}_{\rm{r}}}(\theta _{1,Q}^{\rm{r}}),...,{{\bf{a}}_{\rm{r}}}(\theta _{P,Q}^{\rm{r}})\big]
\end{split}
\end{equation}
and
\begin{equation}\label{2.11}
\begin{split}
{{\bf{A}}_{\rm{t}}}&=\big[{{\bf{A}}_{\rm{t1}}}, {{\bf{A}}_{\rm{t2}}},..., {{\bf{A}}_{{\rm{t}}P}} \big]\\
&= \big[{{\bf{a}}_{\rm{t}}}(\theta _{1,1}^{\rm{t}}),{{\bf{a}}_{\rm{t}}}(\theta _{1,2}^{\rm{t}}),...,{{\bf{a}}_{\rm{t}}}(\theta _{1,Q}^{\rm{t}}),...,{{\bf{a}}_{\rm{t}}}(\theta _{P,Q}^{\rm{t}})\big]
\end{split}
\end{equation}
contain the array response vectors. Inspired by the diagonal form of (9), we find that the total number of paths is the upper bound of the rank of the millimeter wave channel matrix, which means the number of data streams $N_{\rm{s}}$ should satisfy $N_{\rm{s}}\leq PQ$ to maintain the good system performance.

\begin{figure}[t]
\centering
\includegraphics[scale=0.15]{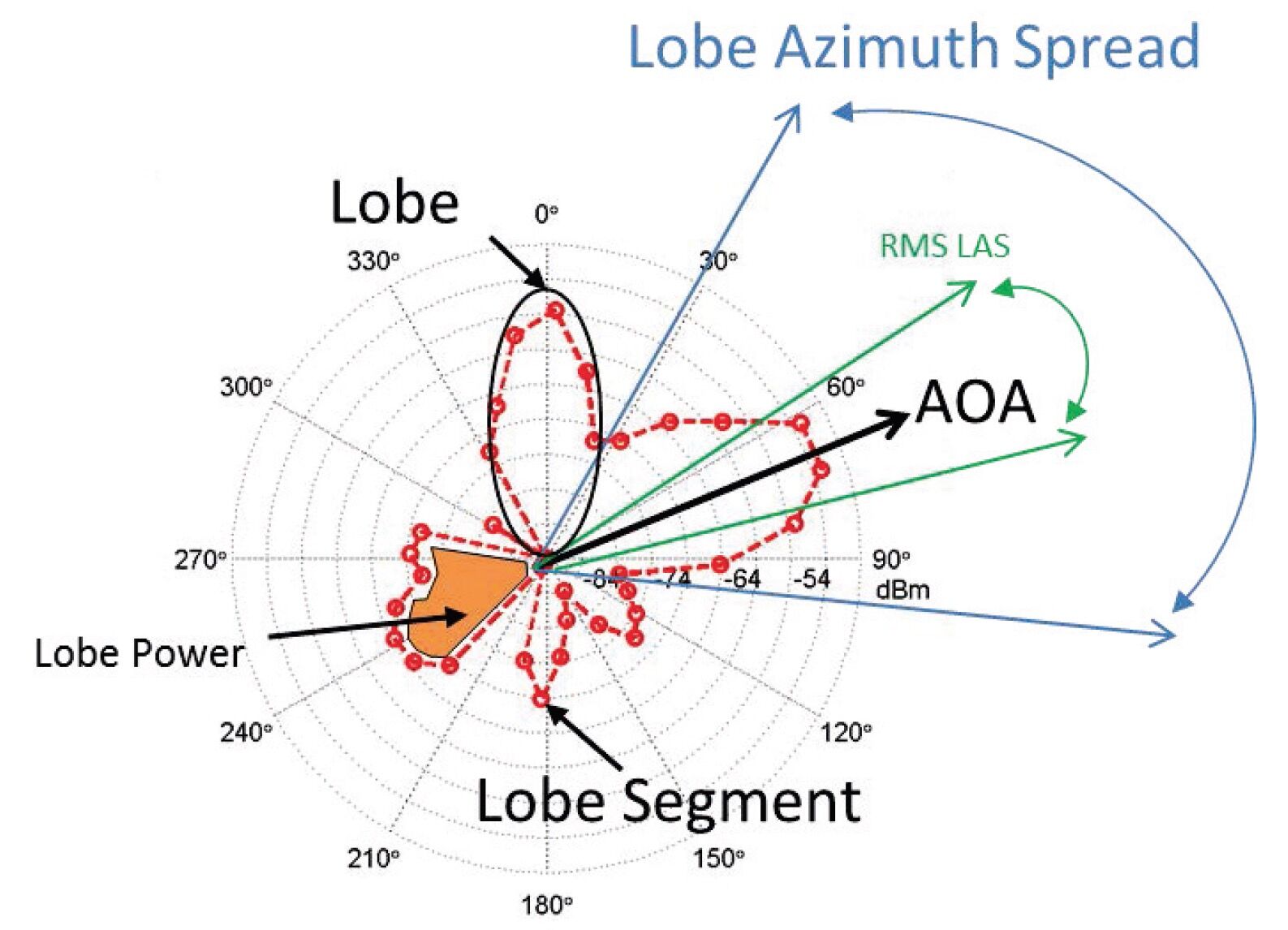}
\caption{The polar plot of the millimeter wave channel measured in Manhattan at 28 GHz
\cite{21Samimi2014}.}
\label{fig2}
\end{figure}

\subsection{Problem Formulation}
The target of designing the hybrid precoding matrices (${\bf{F}}_{{\rm{RF}}}, {\bf{F}}_{{\rm{BB}}}, {\bf{W}}_{{\rm{RF}}}, {\bf{W}}_{{\rm{BB}}}$) is to maximize the spectral efficiency. Therefore, the optimization problem could be expressed as
\begin{equation}\label{2.13}
\begin{array}{l}
({\bf{F}}_{{\rm{RF}}}^{{\rm{opt}}}, {\bf{F}}_{{\rm{BB}}}^{{\rm{opt}}}, {\bf{W}}_{{\rm{RF}}}^{{\rm{opt}}}, {\bf{W}}_{{\rm{BB}}}^{{\rm{opt}}}) = \mathop {{\rm{arg}}{\kern 1pt} {\kern 1pt} {\rm{max}}} R \\
 {\kern 83pt} {\rm{s}}{\rm{.t}}{\rm{.}}{\kern 7pt}  {{\bf{F}}_{{\rm{RF}}}} \in {{\cal F}_{{\rm{RF}}}},\\
 {\kern 102pt} {{\bf{W}}_{{\rm{RF}}}} \in {{\cal W}_{{\rm{RF}}}},\\
 {\kern 97pt} \left\| {{{\bf{F}}_{{\rm{RF}}}}{{\bf{F}}_{{\rm{BB}}}}} \right\|_{_F}^2 = {N_{\rm s}},
\end{array}
\end{equation}
where ${{\cal F}_{{\rm{RF}}}}$ and ${{\cal W}_{{\rm{RF}}}}$ are the sets of the feasible analog precoders and combiners induced by the constant amplitude constraint, respectively. Directly optimizing the problem (12) is very non-trivial due to the constant amplitude constraint on the analog precoding matrices. Based on the mathematical derivations in \cite{5Ayach2013}, the design of the precoding matrices and combining matrices are firstly decoupled, which indicates that we could focus on the design of precoding matrices $({\bf{F}}_{{\rm{RF}}}^{{\rm{opt}}}, {\bf{F}}_{{\rm{BB}}}^{{\rm{opt}}})$. The combining matrices $({\bf{W}}_{{\rm{RF}}}^{{\rm{opt}}}, {\bf{W}}_{{\rm{BB}}}^{{\rm{opt}}})$ could be obtained in the similar way except that there is no extra power constraint \cite{12Yu2016}. Then, an equivalent sparse reconstruction problem is formulated, which is aimed to minimize the Euclidean distance between the product of the analog and digital precoding matrices and the optimal unconstrained precoding matrix. 
The sparse reconstruction problem for the transmitter is given by
\begin{equation}\label{2}
\begin{array}{l}
({\bf{F}}_{{\rm{RF}}}^{{\rm{opt}}},{\bf{F}}_{{\rm{BB}}}^{{\rm{opt}}}) = \mathop {{\rm{arg}}{\kern 1pt} {\kern 1pt} {\rm{min}}}\limits_{{{\bf{F}}_{{\rm{BB}}}},{{\bf{F}}_{{\rm{RF}}}}} {\left\| {{{\bf{F}}_{{\rm{opt}}}} - {{\bf{F}}_{{\rm{RF}}}}{{\bf{F}}_{{\rm{BB}}}}} \right\|_F},{\kern 1pt} {\kern 1pt} {\kern 1pt} {\kern 1pt} {\kern 1pt} {\kern 1pt} {\kern 1pt} {\kern 1pt} {\kern 1pt} {\kern 1pt} {\kern 1pt} {\kern 1pt} {\kern 1pt} {\kern 1pt} {\kern 1pt} {\kern 1pt} \\
 {\kern 83pt} {\rm{s}}{\rm{.t}}{\rm{.}}{\kern 7pt}  {{\bf{F}}_{{\rm{RF}}}} \in {{\cal F}_{{\rm{RF}}}},\\
 {\kern 97pt} \left\| {{{\bf{F}}_{{\rm{RF}}}}{{\bf{F}}_{{\rm{BB}}}}} \right\|_{_F}^2 = {N_{\rm s}},
\end{array}
\end{equation}
where ${{\bf{F}}_{{\rm{opt}}}}$ is the optimal unconstrained reference precoding matrix which could be obtained from the singular value decomposition (SVD) of ${\bf{H}}$.

\section{Non-Uniform Quantization Codebook Based Hybrid Precoding for the Full-connected Structure}
For the full-connected structure, each RF chain is connected to all antennas, as shown in Fig. 1(a). In the design of the hybrid precoding scheme, similar to \cite{12Yu2016}, we decouple the design of the analog precoding matrices and digital precoding matrices. Since there is no constant amplitude constraint on digital precoding mareices, the digital precoding matrices could be obtained by simply SVD for the effective channel (which is defined in Section III.C).  Therefore, we mainly focus on the design of analog precoding matrices by using quantization codebooks, which means that the quantization codebooks are the key points to design the hybrid precoding matrices. We will firstly introduce the classical uniform quantization beamsteering codebooks. Then, non-uniform quantization codebooks are proposed by exploiting the sparseness property of the millimeter wave in the angular domain. Finally, a low-complexity NUQ codebook based hybrid precoding scheme for the full-connected structure is proposed.
\subsection{Uniform Quantization Codebooks}
Most of the prior works on the design of the hybrid precoding for limited feedback millimeter wave MIMO systems were based on the UQ beamsteering codebooks, since the UQ beamsteering codebooks, which are of relatively small size, only depend on the single parameter (the beamsteering angle) quantization and could provide good quantization performance \cite{5Ayach2013, 18Alkhateeb2015}. The UQ beamsteering codebooks for the transmitter and receiver can be respectively written as
\begin{equation}\label{3.}
{\bf{A}}_{\rm{t}}^{{\rm{quant}}} = \Big[{\bf a}_{\rm{t}}^{{\rm{quant}}}({\theta _1}),{\bf a}_{\rm{t}}^{{\rm{quant}}}({\theta _2}),...,{\bf a}_{\rm{t}}^{{\rm{quant}}}({\theta _{{2^b}}})\Big],
\end{equation}
and
\begin{equation}\label{3.}
{\bf{A}}_{\rm{r}}^{{\rm{quant}}} = \Big[{\bf a}_{\rm{r}}^{{\rm{quant}}}({\theta _1}),{\bf a}_{\rm{r}}^{{\rm{quant}}}({\theta _2}),...,{\bf a}_{\rm{r}}^{{\rm{quant}}}({\theta _{{2^b}}})\Big],
\end{equation}
where the entries of ${\bf{A}}_{\rm{t}}^{{\rm{quant}}}$ and ${\bf{A}}_{\rm{r}}^{{\rm{quant}}}$ are
\begin{equation}\label{3.}
\small
{\bf a}_{\rm{t}}^{{\rm{quant}}}({\theta _i}) = \dfrac{1}{{\sqrt {{N_{\rm t}}} }}\Big[1,{e^{j\pi {\rm sin}(\frac{{2\pi (i-1)}}{{{2^b}}})}},...,{e^{j({N_{\rm t}} - 1)\pi {\rm sin}(\frac{{2\pi (i-1)}}{{{2^b}}})}} \Big]^T,
\end{equation}
and
\begin{equation}\label{3.}
\small
{\bf a}_{\rm{r}}^{{\rm{quant}}}({\theta _i}) = \dfrac{1}{{\sqrt {{N_{\rm r}}} }}\Big[1,{e^{j\pi {\rm sin}(\frac{{2\pi (i-1)}}{{{2^b}}})}},...,{e^{j({N_{\rm r}} - 1)\pi {\rm sin}(\frac{{2\pi (i-1)}}{{{2^b}}})}} \Big]^T,
\end{equation}
respectively. Obviously, the larger the number of quantization bits $b$ is, the higher the quantization accuracy of the UQ beamsteering codebooks is.

\subsection{Non-Uniform Quantization Codebooks for the Full-connected Structure}
\begin{figure}[t]
\centering
\subfigure[]{
\begin{minipage}{0.45\textwidth}
\centering
\includegraphics[scale=0.58]{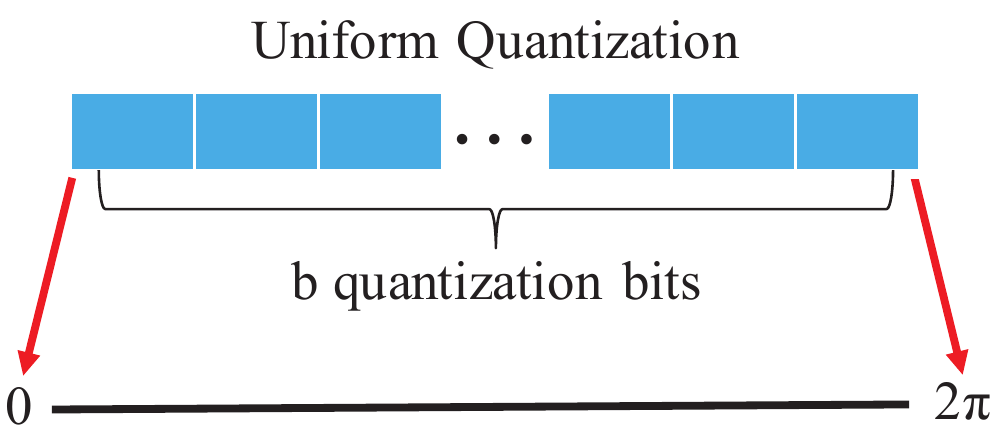}
\end{minipage}
}

\subfigure[]{
\begin{minipage}{0.45\textwidth}
\centering
\includegraphics[scale=0.58]{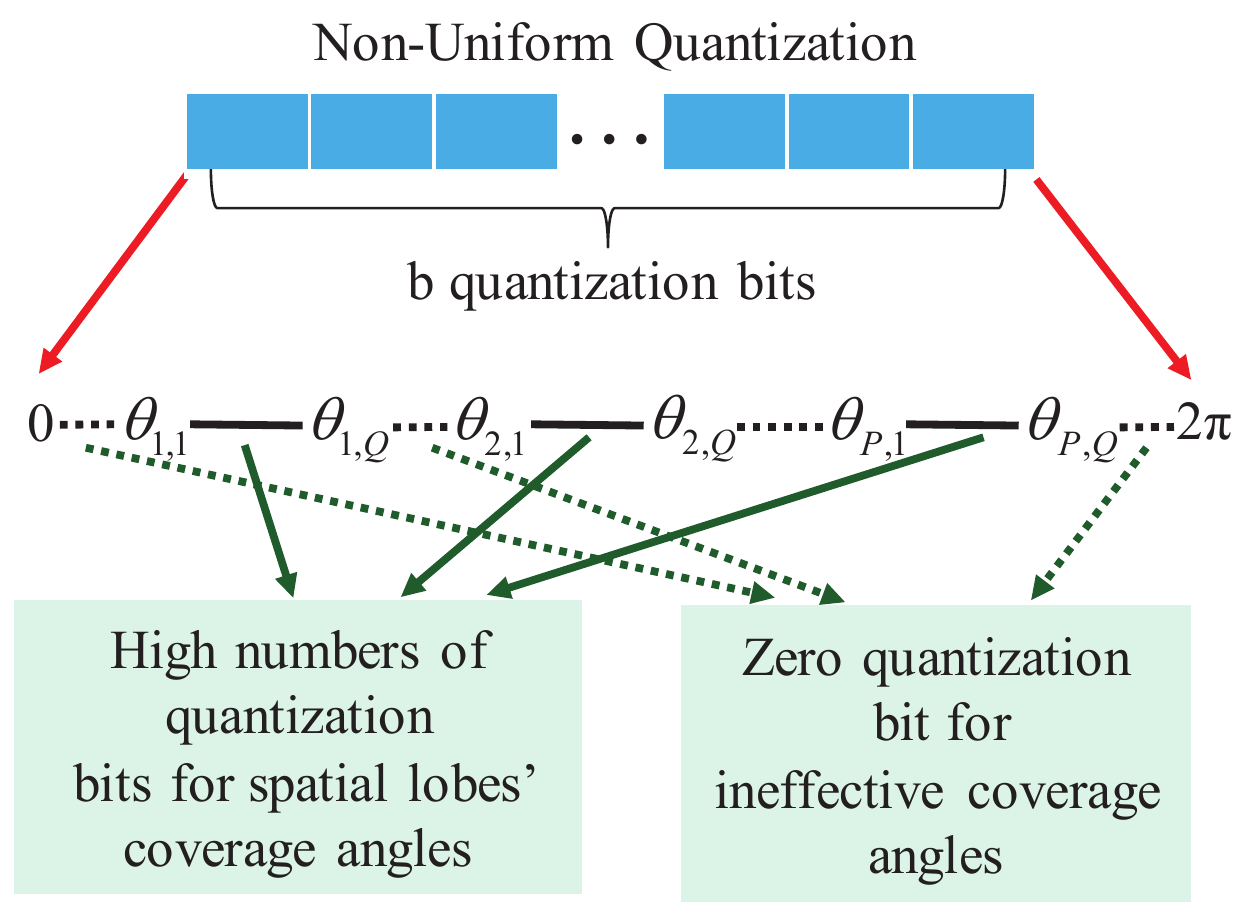}
\end{minipage}
}
\caption{Two quantization schemes. (a) Uniform-quantization scheme, where b quantization bits are uniformly mapped to the total angular domain; (b) Non-uniform quantization scheme, where b quantization bits are only mapped to spatial lobes' coverage angles.}
\label{fig2}
\end{figure}
In this subsection, we present the non-uniform quantization codebooks for the full-connected structure. Since the angles of paths in different spatial lobes are sufficiently separated, we divide the total angular domain into effective spatial lobes' coverage angles and ineffective coverage angles. Therefore, we could employ high numbers of quantization bits to quantize effective spatial lobes' coverage angles to obtain good performance and employ zero quantization bit to quantize ineffective coverage angles to maintain the average number of quantization bits unchanged.
That is we only map the quantization bits to effective spatial lobes' coverage angles.
In the meantime, without loss of generality, we assume that the transmitter and receiver have the same spatial lobes distribution, i.e., the mean angles of spatial lobes (${\widetilde{\theta}}_i, i=1,2,...,P$) are evenly distributed within $[0,2\pi]$, and the angles of the subpaths in each spatial lobe are randomly distributed with a finite angle spread ($\omega_i$) \cite{21Samimi2014, 22Samimi2016}.
Fig. 3 shows the details of the non-uniform quantization, where $\theta_{i,q}$ is the angle of the $q_{th}$ subpaths in the $i_{th}$ spatial lobe and $b$ quantization bits are non-uniformly mapped to different coverage angles.

We define the beam coverage of the $i_{th}$ spatial lobe as
\begin{equation}\label{3.6}
\begin{split}
{\bf{\cal{CV}}}({SL}_{{i}}) &= \mathop  \bigcup \limits_{{{j}} = 1,2,...,Q} {\cal{CV}}\big(a(\theta _{{{i,j}}})\big)\\
&=[\theta_{i,1}, \theta_{i,Q}]\\
&=\Big[{\widetilde{\theta}}_i-\omega_i/2, {\widetilde{\theta}}_i+\omega_i/2\Big], {{i}}=1,2,...,{{P}},
\end{split}
\end{equation}
where ${\cal{CV}}\big(a(\theta _{{{i,j}}})\big)$ is the beam coverage of the steering vector for the $j_{th}$ subpath in the $i_{th}$ spatial lobe.

The quantized angles coverage for the $i_{th}$ spatial lobe is designed as
\begin{equation}\label{3.7}
\small
{{\cal{CV}}_{i}^{{\rm{quant}}}}=[{\widetilde{\theta}}_i-{\theta^{{\rm range}}_i}/2, {\widetilde{\theta}}_i+{\theta^{{\rm range}}_i}/2],
\end{equation}
which should satisfy
\begin{equation}\label{3.8}
\omega_i\leq {\theta^{{\rm range}}_i},
\end{equation}
\begin{equation}\label{3.10}
\mathop  \bigcup \limits_{{{i}} = 1,2,...,P} {{\cal{CV}}_{ i}^{{\rm{quant}}}}\subseteq [0,2\pi],
\end{equation}
and
\begin{equation}\label{3.11}
\mathop  \bigcap \limits_{{{i}} = 1,2,...,P} {{\cal{CV}}_{ i}^{{\rm{quant}}}}=\varnothing,
\end{equation}
where ${\theta^{{\rm range}}_i}$ is the quantized angle range for the $i_{th}$ spatial lobe. (20) guarantees the codebooks are able to quantize the actual angles of all subpaths in each spatial lobe. (\ref{3.10}) is an obvious constraint. (\ref{3.11}) comes from the sparseness property of the millimeter wave in the angular domain that the angles of paths in different spatial lobes are sufficiently separated.

Denoting by $b$ the number of quantization bits, the quantized accuracy of the angle is defined as
\begin{equation}\label{3.}
\Delta_\theta=2\pi/(2^b),
\end{equation}
and the vector of quantized angles for the $i_{th}$ spatial lobe could be obtained as
\begin{equation}\label{3.12}
{\Theta}_{ i}^{{\rm{quant}}}=\Big[{\widetilde{\theta}}_i-\frac{\theta^{{\rm range}}_i}{2}, {\widetilde{\theta}}_i-\frac{\theta^{{\rm range}}_i}{2}+\Delta_\theta, ..., {\widetilde{\theta}}_i+\frac{\theta^{{\rm range}}_i}{2}\Big].
\end{equation}
For the full-connected structure employed with $N_{\rm{t}}\times N_{\rm{r}}$ antennas, the antenna indexes for the transmitter and receiver have the form
\begin{equation}\label{3.13}
{\bf A}_{\rm{index}}^{\rm{t}}=[0, 1, ..., (N_{\rm{t}}-1)],
\end{equation}
and
\begin{equation}\label{3.14}
{\bf A}_{\rm{index}}^{\rm{r}}=[0, 1, ..., (N_{\rm{r}}-1)],
\end{equation}
respectively. Therefore, the non-uniform quantization codebooks for the $i_{th}$ spatial lobe can be written as
\begin{equation}\label{3.15}
{{\bf{A}}_{{\rm{t}}i}^{{\rm{quant}}}}(:,m)=\sqrt {1/N_{\rm{t}}}*e^{(j\pi {\bf A}_{\rm{index}}^{\rm{t}}{\rm sin}(\Theta_{i}^{{\rm{quant}}}(m)))},
\end{equation}
and
\begin{equation}\label{3.16}
{{\bf{A}}_{{\rm{r}}i}^{{\rm{quant}}}}(:,m)=\sqrt {1/N_{\rm{r}}}*e^{(j\pi {\bf A}_{\rm{index}}^{\rm{r}}{\rm sin}(\Theta_{i}^{{\rm{quant}}}(m)))},
\end{equation}
where $m=1,2,...,{\cal L}(\Theta_{i}^{{\rm{quant}}})$. The non-uniform quantization codebooks for the full-connected structure are summarized in {\bf{Algorithm} \ref{alg:1}}.
\begin{algorithm}[t]
    \renewcommand{\algorithmicrequire}{\textbf{Input:}}
	\renewcommand{\algorithmicensure}{\textbf{Output:}}
	\caption{Non-Uniform Quantization (NUQ) Codebooks for the Full-connected Structure}
    \label{alg:1}
	\begin{algorithmic}[1]
        \REQUIRE 
        $ N_{\rm{t}}$, $N_{\rm{r}}$, $b$
        \ENSURE ${{\bf{A}}_{\rm t}^{{\rm{quant}}}}$, ${{\bf{A}}_{\rm r}^{{\rm{quant}}}}$
        \STATE ${\bf A}_{\rm{index}}^{\rm{t}}=[0, 1, ..., (N_{\rm{t}}-1)$], ${\bf A}_{\rm{index}}^{\rm{r}}=[0, 1, ..., (N_{\rm{r}}-1)]$
        \FOR {$i\leq P$}
        \STATE ${{\cal{CV}}_{i}^{{\rm{quant}}}}=[{\widetilde{\theta}}_i-{\theta^{{\rm range}}_i}/2, {\widetilde{\theta}}_i+{\theta^{{\rm range}}_i}/2]$
        \STATE $\Delta_\theta=2\pi/(2^b)$
        \STATE ${\Theta}_{ i}^{{\rm{quant}}}=\Big[{\widetilde{\theta}}_i-\frac{\theta^{{\rm range}}_i}{2}, {\widetilde{\theta}}_i-\frac{\theta^{{\rm range}}_i}{2}+\Delta_\theta, ..., {\widetilde{\theta}}_i+\frac{\theta^{{\rm range}}_i}{2}\Big]$
        \FOR {$m\leq {\cal L}(\Theta_{i}^{{\rm{quant}}})$}
        \STATE ${{\bf{A}}_{{\rm{t}}i}^{{\rm{quant}}}}(:,m)=\sqrt {1/N_{\rm{t}}}*e^{(j\pi {\bf A}_{\rm{index}}^{\rm{t}}{\rm sin}(\Theta_{i}^{{\rm{quant}}}(m)))}$
        \STATE${{\bf{A}}_{{\rm{r}}i}^{{\rm{quant}}}}(:,m)=\sqrt {1/N_{\rm{r}}}*e^{(j\pi {\bf A}_{\rm{index}}^{\rm{r}}{\rm sin}(\Theta_{i}^{{\rm{quant}}}(m)))} $
        \ENDFOR
        \ENDFOR
        \STATE ${{\bf{A}}_{\rm t}^{{\rm{quant}}}}=[{{\bf{A}}_{\rm t1}^{{\rm{quant}}}}, {{\bf{A}}_{\rm t2}^{{\rm{quant}}}},...,{{\bf{A}}_{{\rm t}P}^{{\rm{quant}}}}]$
        \STATE ${{\bf{A}}_{\rm r}^{{\rm{quant}}}}=[{{\bf{A}}_{\rm r1}^{{\rm{quant}}}}, {{\bf{A}}_{\rm r2}^{{\rm{quant}}}},...,{{\bf{A}}_{{\rm r}P}^{{\rm{quant}}}}]$
    \end{algorithmic}
\end{algorithm}

 {\emph{Remark 1:}} In conventional MIMO systems where channels are relatively rich, many quantization beamforming codebooks are designed to satisfy some particular properties, e.g., the Grassmannian codebooks in which the property of maximizing the minimum distance between the codebook vectors is adopted \cite{30Love2003}. However, the codebooks designed for traditional MIMO systems may not very suitable for millimeter wave MIMO systems where the channel has limited spatial scattering and large antenna arrays are employed at both the BSs and the MSs, since they are relatively complicated. Motivated by the good performance of the hybrid precoding schemes which are relied on RF beamsteering vectors, the proposed NUQ codebooks are still based on the classic beamsteering codebooks. It is worthy to point out that the idea of NUQ is general for any codebooks. For future work, it is of interest to evaluate the performance of other NUQ based beamforming codebooks for millimeter wave MIMO systems.
\begin{proposition}
Narrowing the angle range which needs to be quantized is equivalent to increasing the quantization accuracy when the number of quantization bits $b$ is fixed. In particular, the quantization accuracy could be increased by 1 bit when the total quantized angle range is narrowed by half.
\end{proposition}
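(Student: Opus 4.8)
The plan is to read ``quantization accuracy'' as the angular step size $\Delta_\theta$ between adjacent codewords, and to track how this step depends jointly on the number of bits $b$ and on the total angular range that the $2^b$ codewords must cover. The single identity I will lean on is that, for a budget of $2^b$ codewords spread uniformly over an angular range of total length $R$, the resolution is $\Delta_\theta = R/2^b$, so that a smaller $R$ yields a finer $\Delta_\theta$, i.e.\ higher accuracy.

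First I would write the two regimes explicitly. In the uniform scheme of (14)--(17) the codewords cover the whole domain $[0,2\pi]$, so $R=2\pi$ and $\Delta_\theta^{\mathrm{UQ}}=2\pi/2^b$, which recovers (23). In the non-uniform scheme the codewords are confined to the union of the quantized coverages $\mathcal{CV}_i^{\mathrm{quant}}$ of (19), whose total length is $\Theta_{\mathrm{total}}=\sum_{i=1}^{P}\theta_i^{\mathrm{range}}$. By (21) and (22) these coverages are pairwise disjoint subintervals of $[0,2\pi]$, and the sparseness of the spatial lobes forces the gaps between them to be strictly positive, so $\Theta_{\mathrm{total}}<2\pi$. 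Keeping the same budget $b$ but redistributing the $2^b$ codewords over this shorter range gives $\Delta_\theta^{\mathrm{NUQ}}=\Theta_{\mathrm{total}}/2^b<2\pi/2^b=\Delta_\theta^{\mathrm{UQ}}$, establishing that narrowing the quantized range increases the accuracy at fixed $b$.

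For the quantitative ``one bit'' claim I would compare the two mechanisms that shrink $\Delta_\theta$. Adding a bit in the standard scheme halves the step, since $2\pi/2^{b+1}=\tfrac{1}{2}\,(2\pi/2^b)$; halving the quantized range at fixed $b$ also halves the step, since $(\Theta_{\mathrm{total}}/2)/2^b=\tfrac{1}{2}\,(\Theta_{\mathrm{total}}/2^b)$. The two operations therefore yield identical resolution, which is the asserted equivalence. A compact way to package both halves of the proposition is to introduce an equivalent bit count $b_{\mathrm{eq}}=b+\log_2\!\big(2\pi/\Theta_{\mathrm{total}}\big)$, so that $\Delta_\theta^{\mathrm{NUQ}}=2\pi/2^{b_{\mathrm{eq}}}$; replacing $\Theta_{\mathrm{total}}$ by $\Theta_{\mathrm{total}}/2$ then increments $b_{\mathrm{eq}}$ by exactly one.

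I expect the only genuine subtlety to be justifying the resolution identity $\Delta_\theta=R/2^b$ in the non-uniform case, i.e.\ arguing that ``$b$ bits'' legitimately corresponds to $2^b$ freely placeable codewords that may be concentrated inside the effective coverages rather than spread across all of $[0,2\pi]$; this is exactly the feedback-overhead viewpoint in which an index into a size-$2^b$ codebook costs $b$ bits. A secondary point worth a remark is that the steering vectors (16)--(17) depend on $\sin\theta$ rather than on $\theta$, so the induced beamspace resolution is not perfectly uniform; since the proposition is phrased entirely in the angle domain, working with the angular step $\Delta_\theta$ is sufficient and this nonlinearity does not affect the stated conclusion.
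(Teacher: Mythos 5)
Your proof is correct and takes essentially the same route as the paper's: your equivalent bit count $b_{\mathrm{eq}}=b+\log_2\bigl(2\pi/\Theta_{\mathrm{total}}\bigr)$ is precisely the paper's $\widehat{b}=b-\log_2\Bigl(\sum_{i=1}^{P}\theta_i^{\mathrm{range}}\big/(2\pi)\Bigr)$, derived from the same resolution identity $2^{b}=\Theta_{\mathrm{total}}/\Delta_\theta$ with $\Delta_\theta=2\pi/2^{\widehat{b}}$, and both arguments obtain the one-bit gain by setting $\Theta_{\mathrm{total}}=\pi$. Your added remarks (strict inequality $\Theta_{\mathrm{total}}<2\pi$ from lobe sparseness, and the $\sin\theta$ nonlinearity of the steering vectors) go slightly beyond what the paper records but do not alter the argument.
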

\begin{proof}
For the millimeter wave channel with $P$ spatial lobes, the total effective beam coverage is
\begin{equation}\label{}
\widetilde{{\cal{CV}}}=\bigcup \limits_{{{i}} = 1,2,...,P} {\bf{\cal{CV}}}({SL}_{{i}}).
\end{equation}
According to (\ref{3.8})-(\ref{3.11}), we have
\begin{equation}\label{}
\widetilde{{\cal{CV}}}\subseteq\mathop  \bigcup \limits_{{{i}} = 1,2,...,P} {{\cal{CV}}_{ i}^{{\rm{quant}}}},
\end{equation}
and
\begin{equation}\label{3.13}
\sum\limits_{i = 1}^P{\theta^{{\rm range}}_i}\leq 2\pi.
\end{equation}
Denoting by $\widehat{b}$ the equivalent quantization bits, we have
\begin{equation}\label{}
2^b=\frac{\sum\limits_{i = 1}^P{\theta^{{\rm range}}_i}}{\Delta_\theta}=\frac{\sum\limits_{i = 1}^P{\theta^{{\rm range}}_i}}{2\pi/2^{\widehat{b}}},
\end{equation}
therefore, we could obtain
\begin{equation}\label{}
{\widehat{b}}=b-\log_2 {\frac{\sum\limits_{i = 1}^P{\theta^{{\rm range}}_i}}{2\pi}}.
\end{equation}
According to (\ref{3.13}), $\log_2 {\frac{\sum\limits_{i = 1}^P{\theta^{{\rm range}}_i}}{2\pi}}\leq 0$, therefore, we have $\widehat{b}\geq b$. Particularly, when $\sum\limits_{i = 1}^P{\theta^{{\rm range}}_i}= \pi$, which means the total quantized angle range is narrowed by half, we could easily obtain that
\begin{equation}\label{}
{\widehat{b}}=b-\log_2 {\frac{\pi}{2\pi}}=b+1,
\end{equation}
\end{proof}

As we know, when $b$ is not large enough, increasing the quantization accuracy means improving the spectral efficiency. Proposition 1 indicates that, when $b$ is fixed, the spectral efficiency could be indirectly improved by narrowing the total angle range ${\sum\limits_{i = 1}^P{{\theta^{{\rm range}}_i}}}$ that needs to be quantized.

\subsection{NUQ-Based Hybrid Precoding for the Full-connected Structure}
\begin{algorithm}[t]
    \renewcommand{\algorithmicrequire}{\textbf{Input:}}
	\renewcommand{\algorithmicensure}{\textbf{Output:}}
	\caption{NUQ Codebook Based Hybrid Precoding for the Full-connected Structure, i.e. NUQ-HYP-Full}
    \label{alg:2}
	\begin{algorithmic}[1]
        \REQUIRE 
        $ {{\bf{A}}_{\rm{t}}}$, ${{\bf{A}}_{\rm{r}}}$, ${{\bf{A}}_{\rm t}^{{\rm{quant}}}}$, ${{\bf{A}}_{\rm r}^{{\rm{quant}}}}$
		\ENSURE ${{\bf F}_{{\rm{RF}}}}$, ${{\bf F}_{{\rm{BB}}}},{{\bf W}_{{\rm{RF}}}}$, ${{\bf W}_{{\rm{BB}}}}$
		\STATE {\bf {First stage}}: Analog precoding matrices design
        \FOR {$i\leq P$}
        \STATE ${\bf{F}}_{{\rm{res}}}={{\bf{A}}_{{\rm{t}}i}}$
        \STATE ${\bf{\Psi}}  = {\bf{A}}_{{\rm{t}}i}^{{\rm{quant*}}}{{\bf{F}}_{{\rm{res}}}}$
		\FOR {$j\leq Q$}
        \STATE $k = {\rm{argma}}{{\rm{x}}_{l = 1,...,{\rm{Q}}}}{\rm{diag}}({\bf{\Psi }}{{\bf{\Psi }}^*})$
        \STATE ${{\bf{F}}_{{{\rm{RF}}_i}}} = \Big[{{\bf{F}}_{{{\rm{RF}}_i}}}\left| {{\bf{A}}_{{{\rm{t}}_i}}^{{\rm{quant}}}}(:, k) \right.\Big]$
        \STATE ${\rm{diag}}({\bf{\Psi }}{{\bf{\Psi }}^*})(k)=0$
		\ENDFOR
        \ENDFOR
        \STATE${{\bf{F}}_{{\rm{RF}}}} = [{{\bf{F}}_{{\rm{RF_1}}}},{{\bf{F}}_{{\rm{RF_2}}}},...,{{\bf{F}}_{{{\rm{BB}}_P}}}]{\kern 1pt} {\kern 1pt}$
        \STATE We could obtain ${{\bf{W}}_{{\rm{RF}}}}$ in the similar way\\
        \STATE ${{\bf{W}}_{{\rm{RF}}}} = [{{\bf{W}}_{{\rm{RF_1}}}},{{\bf{W}}_{{\rm{RF_2}}}},...,{{\bf{W}}_{{{\rm{BB}}_P}}}]$
        \STATE Compute the effective channels ${{\bf{H}}_{{{\rm{eq}}_i}}} = {\bf{W}}_{{{\rm{RF}}_i}}^{\rm{*}}{\bf{H}}{{\bf{F}}_{{{\rm{RF}}_i}}},i=1,2,...,P$
		\STATE {\bf {Second stage}}: Digital precoding matrices design
        \STATE Compute the SVD of each effective channel ${{\bf{H}}_{{{\rm{eq}}_i}}}$, ${\bf{W}}_{{{\rm{RF}}_i}}^{\rm{*}}{\bf{H}}{{\bf{F}}_{{{\rm{RF}}_i}}} = {{\bf{U}}_{{{\rm{eq}}_i}}}{{\bf{\Sigma}}_{{{\rm{eq}}_i}}}{{\bf{V}}_{{{\rm{eq}}_i}}^*}$
        \STATE ${{\bf{F}}_{{{\rm{BB}}_i}}} = {\bf{V}}_{{{\rm{eq}}_i}}$,${{\bf{W}}_{{{\rm{BB}}_i}}} = {{\bf{U}}_{{{\rm{eq}}_i}}}$
        \STATE ${{\bf{F}}_{{\rm{BB}}}} = {\rm{blkdiag}}({{\bf{F}}_{{\rm{BB_1}}}},{{\bf{F}}_{{\rm{BB_2}}}},...,{{\bf{F}}_{{{\rm{BB}}_P}}})$, ${{\bf{W}}_{{\rm{BB}}}} = {\rm{blkdiag}}({{\bf{W}}_{{\rm{BB_1}}}},{{\bf{W}}_{{\rm{BB_2}}}},...,{{\bf{W}}_{{{\rm{BB}}_P}}})$
        \STATE Normalize the digital precoding matrix at the transmitter ${{\bf{F}}_{{\rm{BB}}}} = \sqrt {{N_s}} \frac{{{{\bf{F}}_{{\rm{BB}}}}}}{{{{\left\| {{{\bf{F}}_{{\rm{RF}}}}{{\bf{F}}_{{\rm{BB}}}}} \right\|}_F}}}$
    \end{algorithmic}
\end{algorithm}

We are committed to constructing a low-complexity hybrid precoding solution (i.e., NUQ-HYP-Full) which is based on the following two operations.
\begin{itemize}
  \item The hybrid precoding matrix for each spatial lobe is designed one by one, since the AOAs/AODs of paths in different spatial lobes are sufficiently separated and thus the paths in different spatial lobes could be considered approximately orthogonal.
  \item Considering that the right and left singular matrices of $\bf H$ converge in chordal distance to the antenna response matrices, when the number of paths is much smaller than number of the antennas\cite{29Ayach2012}. We set the antenna response matrices $A_{{\rm t}i}$ and $A_{{\rm r}i}$ rather than the optimal unconstrained precoding matrices as the reference precoding matrices for the $i_{th}$ spatial lobe, since $Q\ll min(N_{\rm t}, N_{\rm r})$.
\end{itemize}
Once we obtain the NUQ codebooks or the NUQ candidate matrices, the analog precoding matrices could be obtained by searching all entries of the candidate matrices to find the vectors which are respectively closest to each entry of the reference precoding matrices in the $l_2$ norm sense.

The analog precoding matrix design problem for the $i_{th}$ spatial lobe at the transmitter could be formulated as
\begin{equation}\label{3.c1}
\begin{array}{l}
{\bf{F}}_{{\rm{RF}}_i}^{{\rm{opt}}} = \mathop {{\rm{arg}}{\kern 1pt} {\kern 1pt} {\rm{min}}} {\left\| {\bf{F}}_{{\rm{RF}}_i} - {{\bf{A}}_{{\rm t}i}} \right\|_F},{\kern 1pt} {\kern 1pt} {\kern 1pt} {\kern 1pt} {\kern 1pt} {\kern 1pt} {\kern 1pt} {\kern 1pt} {\kern 1pt} {\kern 1pt} {\kern 1pt} {\kern 1pt} {\kern 1pt} {\kern 1pt} {\kern 1pt} {\kern 1pt} \\
 {\kern 54pt} {\rm{s}}{\rm{.t}}{\rm{.}}{\kern 7pt}  {\bf{F}}_{{\rm{RF}}_i} \in {{\bf{A}}_{{\rm t}i}^{{\rm{quant}}}} ,
\end{array}
\end{equation}
which could be equivalently solved by finding the $Q$ vectors in ${{\bf{A}}_{{\rm t}i}^{{\rm{quant}}}}$ along which the reference precoding matrix has the largest $Q$ projections. The combining matrix ${{\bf{W}}_{{{\rm{RF}}_i}}}$ for the $i_{th}$ spatial lobe could be obtained similarly.

After designing the analog precoding matrices, digital precoding matrices could be obtained by simply SVD for the effective channel since there is no constant amplitude constraint on digital precoding mareices. The effective channel for each spatial lobe could be calculated as
\begin{equation}\label{}
\begin{split}
{{\bf{H}}_{{{\rm{eq}}_i}}} &= {\bf{W}}_{{{\rm{RF}}_i}}^* {\bf{H}}{{\bf{F}}_{{{\rm{RF}}_i}}},\\
&={{\bf{U}}_{{{\rm{eq}}_i}}}{{\bf{\Sigma}}_{{{\rm{eq}}_i}}}{{\bf{V}}_{{{\rm{eq}}_i}}^*}, i=1,2,...,P.
\end{split}
\end{equation}
Therefore, the digital precoding matrices for the $i_{th}$ spatial lobe could be computed as
\begin{equation}\label{}
{{\bf{F}}_{{{\rm{BB}}_i}}} = {\bf{V}}_{{{\rm{eq}}_i}},
\end{equation}
and
\begin{equation}\label{}
{{\bf{W}}_{{{\rm{BB}}_i}}} = {{\bf{U}}_{{{\rm{eq}}_i}}}.
\end{equation}
The total digital precoding matrices for the transmitter and receiver are the block diagonal concatenation of ${{\bf{F}}_{{{\rm{BB}}_i}}}$ and ${{\bf{W}}_{{{\rm{BB}}_i}}}$, respectively. Finally, we normalize the total digital precoding matrix to satisfy the power constraint at the transmitter. The proposed NUQ-HYP-Full scheme for the full-connected structure is summarized in {\bf{Algorithm 2}}.


\section{Non-Uniform Quantization Codebook Based Hybrid Precoding for the Sub-connected Structure}
Different from the full-connected structure, each RF chain in the sub-connected structure is only connected to a subset of antennas or a subarray, which dramatically reduces the number of phase shifters, as shown in Fig.1(b) \cite{15Singh2015, 16Gao2016, 17Park2017}. Therefore, the sub-connected structure is more energy efficient than the full-connected structure. For simplicity but without loss of generality, we assume the transmitter and the receiver contain the same number of RF chains, i.e., ${N_{{\rm{RF}}}}={N_{{\rm{RF}}}^{\rm t}}={N_{{\rm{RF}}}^{\rm r}}$.

Similar as the hybrid precoding design in the full-connected structure, we also decouple the design of the analog and digital precoding matrices.
We focus on the design of the analog precoding matrices with pre-defined quantization codebooks in the sub-connected structure, since the digital precoding matrices could be directly obtained by simply SVD for the effective channel. To the best of the authors' knowledge, our work is the first attempt to utilize beamsteering based quantization codebooks to design hybrid precoding matrices in the sub-connected structure. 

\subsection{Non-Uniform Quantization Codebooks for the Sub-connected Structure}
\begin{algorithm}[t]
    \renewcommand{\algorithmicrequire}{\textbf{Input:}}
	\renewcommand{\algorithmicensure}{\textbf{Output:}}
	\caption{Non-Uniform Quantization (NUQ) Codebooks for the Sub-connected Structure}
    \label{alg:3}
	\begin{algorithmic}[1]
        \REQUIRE 
        $ N_{\rm{t}}$, $N_{\rm{r}}$, $N_{\rm{RF}}$,  $b$
        \ENSURE $\widehat{{\bf{A}}}_{\rm t}^{{\rm{quant}}}$, $\widehat{{\bf{A}}}_{\rm r}^{{\rm{quant}}}$
        \STATE $ N_{\rm{t}}^{sub}=N_{\rm{t}}/N_{\rm{RF}}$, $ N_{\rm{r}}^{sub}=N_{\rm{r}}/N_{\rm{RF}}$
        \FOR {$n\leq N_{\rm{RF}}$}
        \STATE $\widehat{\bf{A}}_{\rm{index}}^{\rm{t}}(n,:)=\Lambda_{{\rm{t}}{n}}=[(n-1)N_{\rm{t}}^{sub}, ..., n(N_{\rm{t}}^{sub})-1]$
        \STATE $\widehat{\bf{A}}_{\rm{index}}^{\rm{r}}(n,:)=\Lambda_{{\rm{r}}{n}}=[(n-1)N_{\rm{r}}^{sub}, ..., n(N_{\rm{r}}^{sub})-1]$
        \ENDFOR
        \FOR {$i\leq P$}
        \STATE ${{\cal{CV}}_{i}^{{\rm{quant}}}}=[{\widetilde{\theta}}_i-{\theta^{{\rm range}}_i}/2, {\widetilde{\theta}}_i+{\theta^{{\rm range}}_i}/2]$
        \STATE $\Delta_\theta=2\pi/(2^b)$
        \STATE ${\Theta}_{ i}^{{\rm{quant}}}=\Big[{\widetilde{\theta}}_i-\frac{\theta^{{\rm range}}_i}{2}, {\widetilde{\theta}}_i-\frac{\theta^{{\rm range}}_i}{2}+\Delta_\theta, ..., {\widetilde{\theta}}_i+\frac{\theta^{{\rm range}}_i}{2}\Big]$
        \FOR {$q\leq Q$}
        \FOR {$m\leq {\cal L}(\Theta_{i}^{{\rm{quant}}})$}
        \STATE ${\widehat{{\bf{A}}}_{{\rm t}{iq}}^{{\rm{quant}}}}(:,m)=\sqrt {\frac{1}{N_{\rm{t}}^{\rm sub}}}e^{(j\pi\Lambda_{{\rm t}((i-1)*Q+q)}*{\rm sin}(\Theta_{i}^{{\rm{quant}}}(m)))} $
        \STATE ${\widehat{{\bf{A}}}_{{\rm r}{iq}}^{{\rm{quant}}}}(:,m)=\sqrt {\frac{1}{N_{\rm{r}}^{\rm sub}}}e^{(j\pi\Lambda_{{\rm r}((i-1)*Q+q)}*{\rm sin}(\Theta_{i}^{{\rm{quant}}}(m)))} $
        \ENDFOR
        \ENDFOR
        \STATE ${\widehat{{\bf{A}}}_{{\rm t}{i}}^{{\rm{quant}}}}={\rm{blkdiag}}({\widehat{{\bf{A}}}_{{\rm t}{i1}}^{{\rm{quant}}}}, {\widehat{{\bf{A}}}_{{\rm t}{i2}}^{{\rm{quant}}}},...,{\widehat{{\bf{A}}}_{{\rm t}{iQ}}^{{\rm{quant}}}})$
        \STATE ${\widehat{{\bf{A}}}_{{\rm r}{i}}^{{\rm{quant}}}}={\rm{blkdiag}}({\widehat{{\bf{A}}}_{{\rm r}{i1}}^{{\rm{quant}}}}, {\widehat{{\bf{A}}}_{{\rm r}{i2}}^{{\rm{quant}}}},...,{\widehat{{\bf{A}}}_{{\rm r}{iQ}}^{{\rm{quant}}}})$
        \ENDFOR
        \STATE ${\widehat{{\bf{A}}}_{\rm t}^{{\rm{quant}}}}={\rm{blkdiag}}({\widehat{{\bf{A}}}_{\rm t1}^{{\rm{quant}}}}, {\widehat{{\bf{A}}}_{\rm t2}^{{\rm{quant}}}},...,{\widehat{{\bf{A}}}_{{\rm t}{P}}^{{\rm{quant}}}})$
        \STATE ${\widehat{{\bf{A}}}_{\rm r}^{{\rm{quant}}}}={\rm{blkdiag}}({\widehat{{\bf{A}}}_{\rm r1}^{{\rm{quant}}}}, {\widehat{{\bf{A}}}_{\rm r2}^{{\rm{quant}}}},...,{\widehat{{\bf{A}}}_{{\rm r}{P}}^{{\rm{quant}}}})$
    \end{algorithmic}
\end{algorithm}

In the design of the NUQ codebooks for the sub-connected structure, the number of the total RF chains equals the number of the subarray, i.e.,
${N_{{\rm{RF}}}}={N_{{\rm{sub}}}}$, which means the number of the antennas in each subarray $N_{\rm{t}}^{{\rm{sub}}}$ equals $N_{\rm{t}}/{N_{{\rm{RF}}}}$.
At the transmitter, let the total antenna indexes be $[{0,1,...,N_{\rm{t}}-1}]$ and $\Lambda_{{\rm{t}}k}$ denote the partitioned subset of antenna indexes connected to the $k_{th}$ subarray, such as
\begin{equation}\label{4.5}
\begin{split}
\Lambda_{{\rm{t}}1}&=[{0,...,N_{\rm{t}}^{{\rm{sub}}}-1}],\\
\Lambda_{{\rm{t}}2}&=[{N_{\rm{t}}^{{\rm{sub}}},...,2N_{\rm{t}}^{{\rm{sub}}}-1}],\\
\vdots \\
\Lambda_{{\rm{t}}{N_{{\rm{RF}}}}}&=[{({N_{{\rm{RF}}}}-1)N_{\rm{t}}^{{\rm{sub}}},...,{N_{{\rm{RF}}}}N_{\rm{t}}^{{\rm{sub}}}-1}].
\end{split}
\end{equation}
The total antenna index matrix at the transmitter for the sub-connected structure is
\begin{equation}\label{4.6}
\widehat{\bf{A}}_{\rm{index}}^{\rm{t}}=[\Lambda_{{\rm{t}}1}; \Lambda_{{\rm{t}}2};...; \Lambda_{{{\rm{t}}{N_{{\rm{RF}}}}}}].
\end{equation}
The total antenna index matrix at the receiver $\widehat{\bf{A}}_{\rm{index}}^{\rm{r}}$ for the sub-connected structure could be obtained similarly.

For the limited scattering millimeter wave channel, the number of total effective paths is usually very small  \cite{16Gao2016, 20Alkhateeb2014}. Thus, we make an assumption that the total number of effective paths is no more than the total number of RF chains, i.e., $PQ \leq N_{\rm RF}$.
In the meantime, since each subarray is only connected to one RF chain, each subarray is arranged to precoding for one path separately.
From (\ref{4.5}) and (\ref{4.6}), we could find the analog precoding matrices are in a block diagonal form, which could be depicted as
\begin{equation}\label{}
{ {\widehat{{\bf{F}}}_{{\rm{RF}}}}}= \left[ \begin{array}{l}
{ {\widehat{{\bf{F}}}_{{\rm{RF}}_1}}}\\
{\kern 19pt} { {\widehat{{\bf{F}}}_{{\rm{RF_2}}}}}\\
{\kern 38pt}  \ddots  \\
{\kern 52pt} {\widehat{{\bf{F}}}_{{\rm{RF}}_P}}\\
{\kern 68pt} {\widehat{{\bf{F}}}_{{\rm{RF}}_{(N_{\rm RF}-PQ)}}}\\
\end{array} \right],
\end{equation}
where
\begin{equation}\label{}
{ {\widehat{{\bf{F}}}_{{\rm{RF}}_i}}}= \left[ \begin{array}{l}
{ {{\bf{a}}_{{{i1}}}}}\\
{\kern 19pt} { {{\bf{a}}_{{{i2}}}}}\\
{\kern 38pt}  \ddots  \\
{\kern 52pt} {{\bf{a}}_{{{iQ}}}}
\end{array} \right], i=1,2,...,P,
\end{equation}
is the analog precoding matrix for the $i_{th}$ spatial lobe and the elements of ${\widehat{{\bf{F}}}_{{\rm{RF}}_{(N_{\rm RF}-PQ)}}}$ are 0.

Similar to (\ref{3.7})-(\ref{3.12}), the quantized angles coverage, quantized angle accuracy and quantization angles for the $i_{th}$ spatial lobe are ${{\cal{CV}}_{i}^{{\rm{quant}}}}$, $\Delta_\theta$ and $\Theta_{i}^{{\rm{quant}}}$, respectively. Therefore, the non-uniform quantization codebooks for the sub-connected structure can be designed as
\begin{equation}\label{}
\widehat{{{\bf{A}}}}_{\rm t}^{{\rm{quant}}}= \left[ \begin{array}{l}
\widehat{{{\bf{A}}}}_{\rm t1}^{{\rm{quant}}}\\
{\kern 19pt} \widehat{{{\bf{A}}}}_{\rm t2}^{{\rm{quant}}}\\
{\kern 38pt}  \ddots  \\
{\kern 52pt} \widehat{{{\bf{A}}}}_{{\rm{t}}P}^{{\rm{quant}}}
\end{array} \right],
\end{equation}
where
\begin{equation}\label{}
\widehat{{{\bf{A}}}}_{{\rm{t}}i}^{{\rm{quant}}}= \left[ \begin{array}{l}
\widehat{{{\bf{A}}}}_{{\rm{t}}i1}^{{\rm{quant}}}\\
{\kern 19pt} \widehat{{{\bf{A}}}}_{{\rm{t}}i2}^{{\rm{quant}}}\\
{\kern 38pt}  \ddots  \\
{\kern 52pt} \widehat{{{\bf{A}}}}_{{\rm{t}}iQ}^{{\rm{quant}}}
\end{array} \right], i=1,2,...,P,
\end{equation}
in which
\begin{equation}\label{}
{\widehat{{\bf{A}}}_{{\rm{t}}iq}^{{\rm{quant}}}}(:,m)=\sqrt {\frac{1}{N_{\rm{t}}^{\rm{sub}}}}e^{(j\pi \Lambda_{t((i-1)*Q+q)}*{\rm sin}(\Theta_{i}^{{\rm{quant}}}(m)))},
\end{equation}
where $q=1,2,...,Q$, $m=1,2,...,{\cal L}(\Theta_{i}^{{\rm{quant}}})$. The non-uniform quantization codebooks for the receiver $\widehat{{{\bf{A}}}}_{\rm r}^{{\rm{quant}}}$ could be designed similarly. In {\bf{Algorithm} \ref{alg:3}}, we summarize the process of constructing the non-uniform quantization codebooks for the sub-connected structure.

\subsection{NUQ-Based Hybrid Precoding for the Sub-connected Structure}
\begin{algorithm*}[t]
    \renewcommand{\algorithmicrequire}{\textbf{Input:}}
	\renewcommand{\algorithmicensure}{\textbf{Output:}}
	\caption{NUQ Codebook Based Hybrid Precoding for the Sub-connected Structure, i.e., NUQ-HYP-Sub}
    \label{alg:4}
	\begin{algorithmic}[1]
        \REQUIRE 
        $ {{\bf{A}}_{\rm{t}}}$, ${{\bf{A}}_{\rm{r}}}$, ${\widehat{{\bf{A}}}_{\rm t}^{{\rm{quant}}}}$, ${\widehat{{\bf{A}}}_{\rm r}^{{\rm{quant}}}}$
		\ENSURE ${\widehat{{\bf{F}}}_{{\rm{RF}}}}$, ${\widehat{{\bf{F}}}_{{\rm{BB}}}},{\widehat{{\bf{W}}}_{{\rm{RF}}}}$, ${\widehat{{\bf{W}}}_{{\rm{BB}}}}$
        \STATE {\bf {First stage}}: Analog precoding matrices design for both transmitter and receiver
		\FOR {$i\leq P$}
		\FOR {$q\leq Q$}
        \STATE ${\bf{F}}_{{\rm{res}}}={{\bf{A}}_{{\rm t}{i}}}(:,q)$
        \STATE ${\bf{\Psi}}  = \widehat{{\bf{A}}}_{\rm{t}}^{{\rm{quant*}}}(:,\frac{\cal L({\widehat{{\bf{A}}}_{\rm t}^{{\rm{quant}}}})}{(PQ)}((i-1)*Q+q-1)+1:\frac{\cal L({\widehat{{\bf{A}}}_{\rm t}^{{\rm{quant}}}})}{(PQ)}*((i-1)*Q+q)){{\bf{F}}_{{\rm{res}}}}$
        \STATE $k = {\rm{argma}}{{\rm{x}}_{l = 1,...,{\rm{Q}}}}{\rm{diag}}({\bf{\Psi }}{{\bf{\Psi }}^*})$
        \STATE ${\widehat{{\bf{F}}}_{{\rm{RF}}_{iq}}} = {\widehat{{\bf{A}}}_{\rm{t}}^{{\rm{quant}}}}(:, \frac{(\cal L({\widehat{{\bf{A}}}_{\rm t}^{{\rm{quant}}}})}{(PQ)}*((i-1)*Q+q-1)+k)$
		\ENDFOR
        \STATE${\widehat{{\bf{F}}}_{{\rm{RF}}_i}} = [{\widehat{{\bf{F}}}_{{\rm{RF}}_{i1}}},{\widehat{{\bf{F}}}_{{\rm{RF}}_{i2}}},...,{\widehat{{\bf{F}}}_{{\rm{RF}}_{iQ}}}]{\kern 1pt} {\kern 1pt}$
        \ENDFOR
        \STATE${\widehat{{\bf{F}}}_{{\rm{RF}}}} = [{\widehat{{\bf{F}}}_{{\rm{RF_1}}}},{\widehat{{\bf{F}}}_{{\rm{RF_2}}}},...,{\widehat{{\bf{F}}}_{{\rm{RF}}_P}}]{\kern 1pt} {\kern 1pt}$
        \STATE We could obtain ${\widehat{{\bf{W}}}_{{\rm{RF}}}}$ in the similar way\\
        \STATE ${\widehat{{\bf{W}}}_{{\rm{RF}}}} = [{\widehat{{\bf{W}}}_{{\rm{RF_1}}}},{\widehat{{\bf{W}}}_{{\rm{RF_2}}}},...,{\widehat{{\bf{W}}}_{{\rm{RF}}_P}}]$
        \STATE Compute the effective channels, ${\widehat{{\bf{H}}}_{{\rm{eq}}_i}} = \widehat{{\bf{W}}}_{{\rm{RF}}_i}^{\rm{*}}{\bf{H}}{\widehat{{\bf{F}}}_{{\rm{RF}}_i}},i=1,2,...,P$
        \STATE {\bf {Second stage}}: Digital precoding matrices design for both transmitter and receiver
        \STATE Compute the SVD of each effective channel ${\widehat{{\bf{H}}}_{{\rm{eq}}_i}}$,
        ${\bf{W}}_{{\rm{RF}}_i}^{\rm{*}}{\bf{H}}{{\bf{F}}_{{\rm{RF}}_i}} = {{\bf{U}}_{{{\rm{eq}}}_i}}{{\bf{\Sigma}}_{{{\rm{eq}}}_i}}{{\bf{V}}_{{{\rm{eq}}_i}}^*}$
        \STATE ${{\bf{F}}_{{\rm{BB}}_i}} = {\bf{V}}_{{{\rm{eq}}_i}}$,${{\bf{W}}_{{\rm{BB}}_i}} = {{\bf{U}}_{{{\rm{eq}}_i}}}$
        \STATE ${\widehat{{\bf{F}}}_{{\rm{BB}}}} = {\rm{blkdiag}}({\widehat{{\bf{F}}}_{{\rm{BB_1}}}},{\widehat{{\bf{F}}}_{{\rm{BB_2}}}},...,{\widehat{{\bf{F}}}_{{\rm{BB}}_P}})$,
        ${\widehat{{\bf{W}}}_{{\rm{BB}}}} = {\rm{blkdiag}}({\widehat{{\bf{W}}}_{{\rm{BB_1}}}},{\widehat{{\bf{W}}}_{{\rm{BB_2}}}},...,{\widehat{{\bf{W}}}_{{\rm{BB}}_P}})$
        \STATE Normalize the digital precoding matrix at the transmitter, ${\widehat{{\bf{F}}}_{{\rm{BB}}}} = \sqrt {{N_s}} \frac{{{\widehat{{\bf{F}}}_{{\rm{BB}}}}}}{{{{\left\| {{\widehat{{\bf{F}}}_{{\rm{RF}}}}{\widehat{{\bf{F}}}_{{\rm{BB}}}}} \right\|}_F}}}$
    \end{algorithmic}
\end{algorithm*}
The designing of hybrid precoding matrices for the energy-efficient sub-connected structure is an attractive topic for recent works in millimeter wave MIMO systems, e.g., the SIC-based hybrid precoding scheme which decomposes the total optimization problem into several simple sub optimization problems and achieves the near-optimal performance.
However, most of the prior works such as SIC-based hybrid precoding scheme did not consider the limited feedback problem which is also an important issue for the sub-connected structure.
Moreover, to the best of our knowledge, there has been no beamsteering codebooks based hybrid precoding scheme for the sub-connected architecture in limited feedback millimeter wave MIMO systems.
In this subsection, based on the proposed non-quantization codebooks described in Section IV.A, we propose a NUQ codebook based hybrid precoding scheme for the sub-connected structure, which is summarized in {\bf{Algorithm} \ref{alg:4}}.

In the analog precoding matrices designing stage, we quantize the array response vector corresponding to each path one by one to maintain the block diagonal form of the analog precoding matrix. For the $q_{th}$ subpath in the $i_{th}$ spatial lobe, the reference precoding vector is ${{\bf{A}}_{{\rm{t}}i}}(:,q)$ and the quantized candidate matrix is defined as
\begin{equation}\label{}
\small
\widehat{{{\bf{A}}}}_{{\rm{t}}iq}^{{\rm{can}}}= \left[ \begin{array}{l}
{\bf 0}_{((i-1)Q+q-1)N_{\rm{t}}^{\rm{sub}}\times {\cal L}(\Theta_{i}^{{\rm{quant}}})}\\
{\kern 45pt}\widehat{{{\bf{A}}}}_{{\rm{t}}iq}^{{\rm{quant}}}\\
{\bf 0}_{(N_{\rm{t}}-((i-1)Q+q)N_{\rm{t}}^{\rm{sub}})\times {\cal L}(\Theta_{i}^{{\rm{quant}}})}
\end{array} \right],
\end{equation}
which is actually the $\frac{\cal L({\widehat{{\bf{A}}}_{\rm t}^{{\rm{quant}}}})}{(PQ)}((i-1)*Q+q-1)+1$ to $\frac{\cal L({\widehat{{\bf{A}}}_{\rm t}^{{\rm{quant}}}})}{(PQ)}*((i-1)*Q+q)$ columns of $\widehat{{{\bf{A}}}}_{\rm t}^{{\rm{quant}}}$, where $\frac{\cal L({\widehat{{\bf{A}}}_{\rm t}^{{\rm{quant}}}})}{(PQ)}$ is equal to ${\cal L}(\Theta_{i}^{{\rm{quant}}})$.
Therefore, the analog precoding design problem for the $q_{th}$ subpath in the $i_{th}$ spatial lobe can be formulated as
\begin{equation}\label{4.b2}
\begin{array}{l}
\widehat{{\bf{F}}}_{{\rm{RF}}_{iq}}^{{\rm{opt}}} = \mathop {{\rm{arg}}{\kern 1pt} {\kern 1pt} {\rm{min}}} {\left\| {\widehat{{\bf{F}}}_{{\rm{RF}}_{iq}}} - {{\bf{A}}_{{\rm{t}}i}}(:,q) \right\|_F},{\kern 1pt} {\kern 1pt} {\kern 1pt} {\kern 1pt} {\kern 1pt} {\kern 1pt} {\kern 1pt} {\kern 1pt} {\kern 1pt} {\kern 1pt} {\kern 1pt} {\kern 1pt} {\kern 1pt} {\kern 1pt} {\kern 1pt} {\kern 1pt} \\
 {\kern 58pt} {\rm{s}}{\rm{.t}}{\rm{.}}{\kern 7pt}  {\widehat{{\bf{F}}}_{{\rm{RF}}_{iq}}} \in {\widehat{{\bf{A}}}_{{\rm t}iq}^{{\rm{can}}}}.
\end{array}
\end{equation}
 Similar to (\ref{3.c1}), (\ref{4.b2}) could be also directly solved by searching each entry of ${\widehat{{\bf{A}}}_{{\rm t}iq}^{{\rm{can}}}}$ to find the column vector which has the maximum projection along ${{\bf{A}}_{{\rm{t}}i}}(:,q)$. Thus, the total analog precoding matrices could be obtained by
\begin{equation}\label{}
{\widehat{{\bf{F}}}_{{\rm{RF}}}} = [{\widehat{{\bf{F}}}_{{\rm{RF_1}}}},{\widehat{{\bf{F}}}_{{\rm{RF_2}}}},...,{\widehat{{\bf{F}}}_{{\rm{RF}}_P}}],
\end{equation}
where
\begin{equation}\label{}
{\widehat{{\bf{F}}}_{{\rm{RF}}_i}} = [{\widehat{{\bf{F}}}_{{\rm{RF}}_{i1}}},{\widehat{{\bf{F}}}_{{\rm{RF}}_{i2}}},...,{\widehat{{\bf{F}}}_{{\rm{RF}}_{iQ}}}], i=1,2,...,P.
\end{equation}
The analog combining matrix ${\widehat{{\bf{W}}}_{{\rm{RF}}}}$ for the receiver could be obtained similarly.

Then, for the digital precoding designing stage, the steps are similar as has been described for the full-connected structure. Therefore, we omit the digital precoding designing steps for the sub-connected structure, which are presented in detail in steps 15-20 of {\bf{Algorithm} \ref{alg:4}}.

\section{Simulation Results}
In this section, the performances of the proposed NUQ-HYP-Full and NUQ-HYO-Sub schemes are evaluated. The uniform quantization based OMP scheme (marked as UQ-OMP) and SIC-based hybrid precoding scheme are adopted as the benchmarks for the full-connected and sub-connected structures, respectively. We consider a single-user MIMO system, where ULAs with $\lambda/2$ antenna spacing are equipped at both the transmitter and the receiver. According to the measurements in downtown Manhattan environment which is a typical urban environment, the frequency of the millimeter wave is set to be 28 GHz and the bandwidth is set to be 100 MHz \cite{21Samimi2014, 22Samimi2016, 23Rappaport2015}. The spatial lobes millimeter wave channel (as shown in (8)) which has sparsity property in the angular domain is adopted. For $P$ spatial lobes, the mean angles of spatial lobes are set to be ${\widetilde{\theta}}_i=\theta_{\rm CO}+\frac{2\pi}{P}(i-1), i=1,2,...,P$, where $\theta_{\rm CO}$ is a constant that is randomly selected within $[0, 2\pi]$ and the angle spreads are set as $\omega=\omega_1=...=\omega_P=\frac{\pi}{P}$. The angles of $Q$ subpaths in one spatial lobe are assumed to be randomly distributed and the corresoponding power obeys the Rayleigh distribution.
In order to maximize the reduction in the feedback overhead, the quantized angle range of each spatial lobe is set as ${\theta^{{\rm range}}_i}=\omega, i=1,2,...,P$, which means the total quantized angle range is narrowed by half.

\subsection{Full-connected hybrid precoding structure}
\begin{figure}[t]
\centering
\includegraphics[scale=0.65]{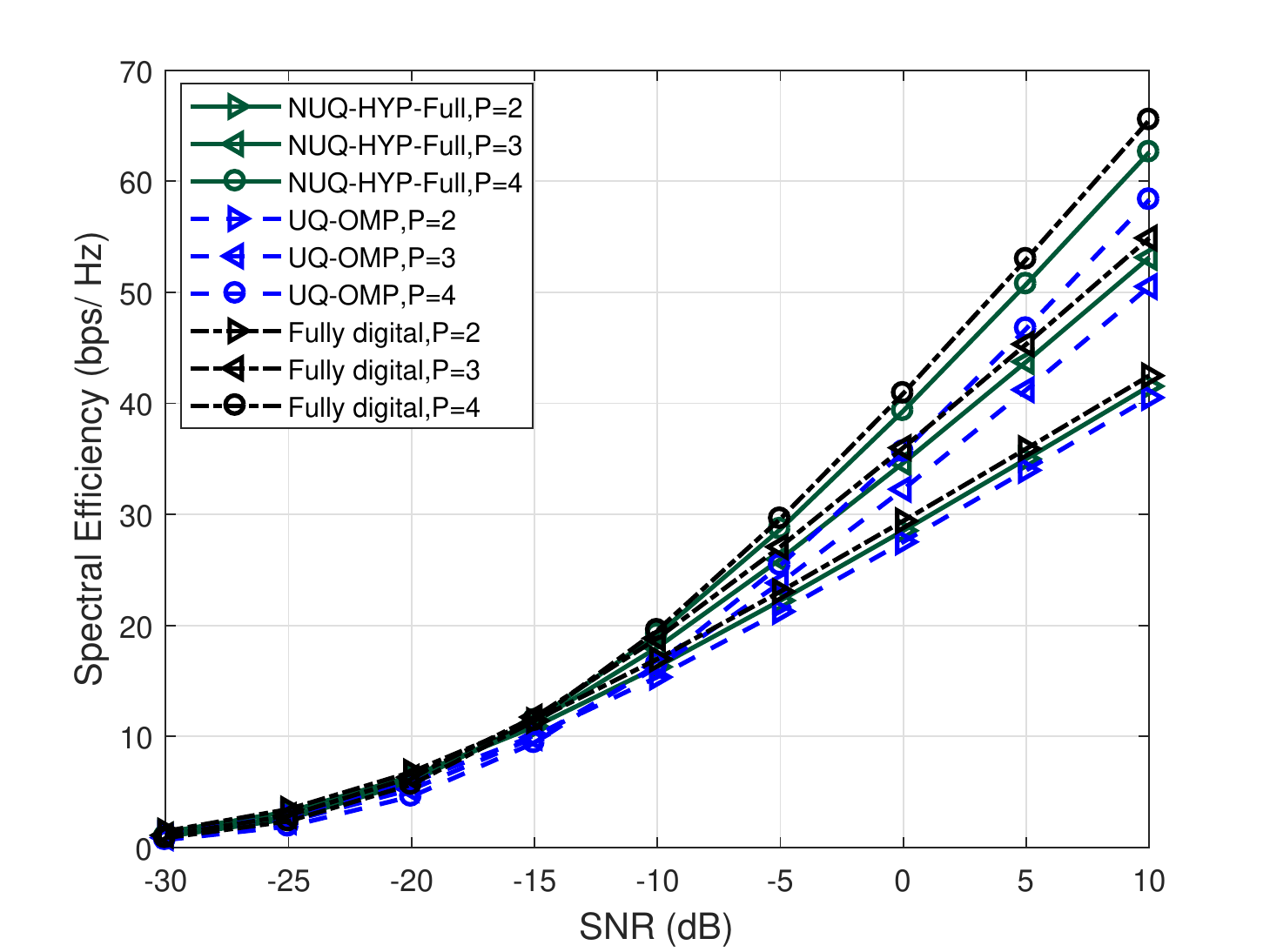}
\caption{Spectral efficiencies of NUQ-HYP-Full, OMP and fully digital precoding schemes with different numbers of spatial lobes, where $ N_{\rm{s}}=PQ$, $N_{\rm{t}}=144,N_{\rm{r}}=36, N_{\rm{RF}}^{\rm{t}}=N_{\rm{RF}}^{\rm{r}}=8, Q=2, b=8$.}
\label{figSprctralEfficiency_234-2}
\end{figure}

Fig. \ref{figSprctralEfficiency_234-2} and Fig. \ref{figSprctralEfficiency_2-234} compare the spectral efficiencies of the proposed NUQ-HYP-Full scheme, OMP precoding scheme and fully digital precoding scheme with different numbers of spatial lobes and subpaths, respectively, where $N_{\rm{t}}\times N_{\rm{r}} =144\times 36$, $N_{\rm{RF}}^{\rm{t}}=N_{\rm{RF}}^{\rm{r}}=8$ and $b=8$. Note that, the OMP precoding scheme we compare here is based on the uniform quantization beamsteering codebooks, which is slightly different from the Algorithm 1 in \cite{5Ayach2013}.
Since the total angle range that needs to be quantized is narrowed by half, the quantization accuracy of the NUQ-HYP-Full scheme could be increased by 1 bit. 
We observe that, for different numbers of spatial lobes and subpaths, the proposed NUQ-HYP-Full scheme always outperforms the OMP precoding scheme and achieves more than $95\%$ of the spectral efficiency achieved by the fully digital precoding scheme.

\begin{figure}[t]
\centering
\includegraphics[scale=0.65]{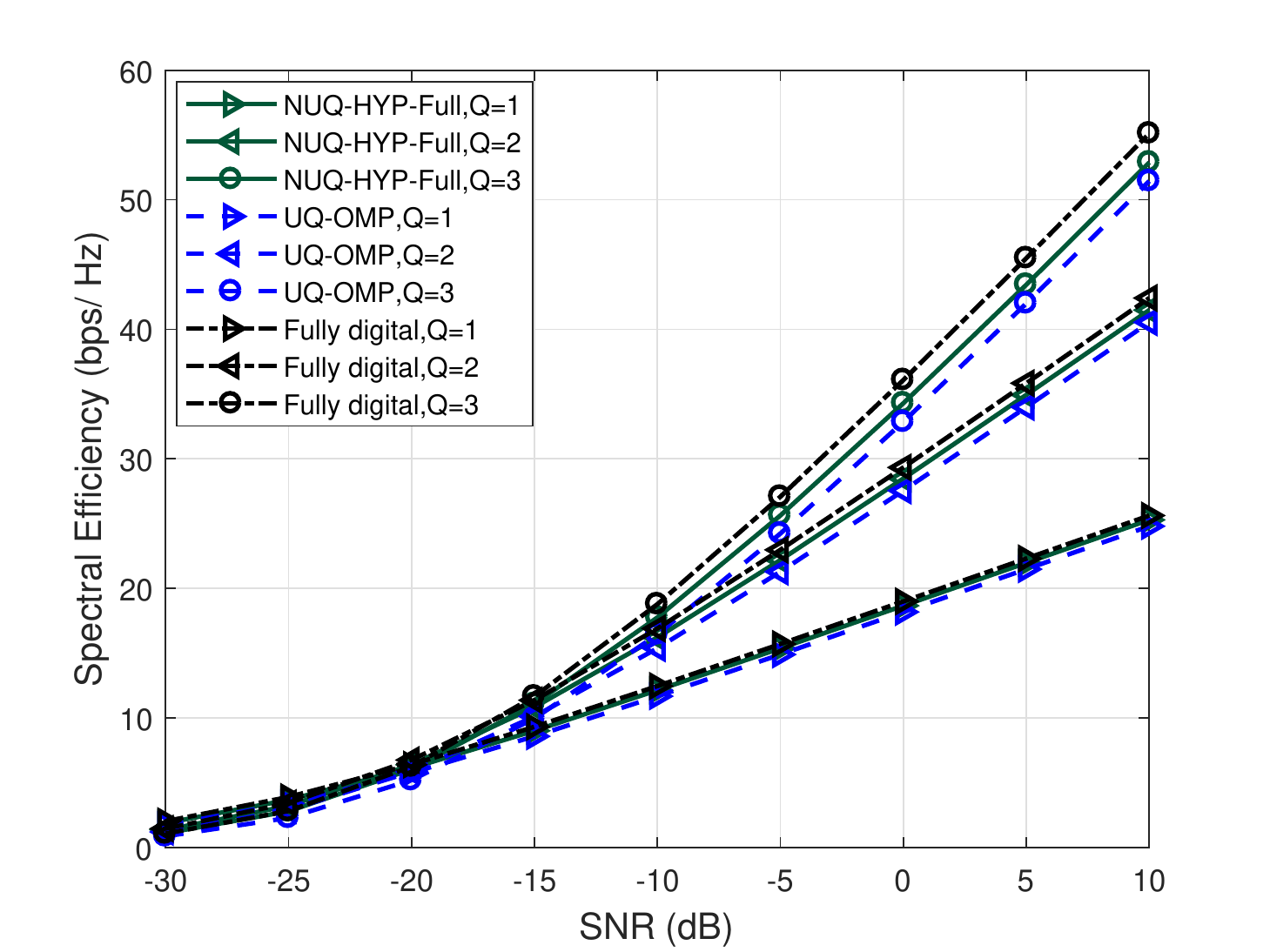}
\caption{Spectral efficiencies of NUQ-HYP-Full, OMP and fully digital precoding schemes with different numbers of subpaths, where $ N_{\rm{s}}=PQ$, $N_{\rm{t}}=144,N_{\rm{r}}=36, N_{\rm{RF}}^{\rm{t}}=N_{\rm{RF}}^{\rm{r}}=8, P=2, b=8$.}
\label{figSprctralEfficiency_2-234}
\end{figure}

\begin{figure}[t]
\centering
\includegraphics[scale=0.65]{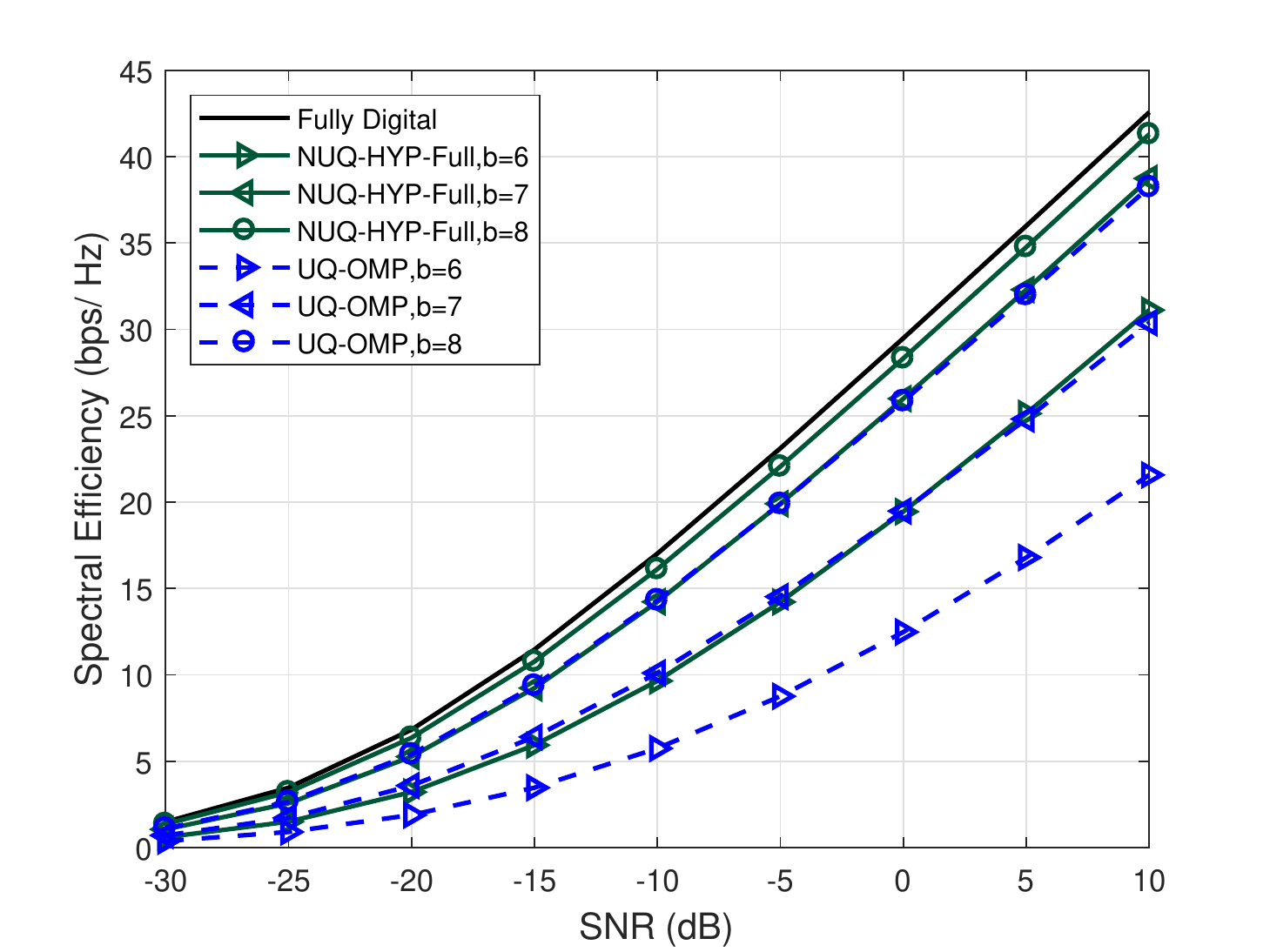}
\caption{Spectral efficiencies of NUQ-HYP-Fully, OMP and fully digital precoding schemes with different quantization bits, where $N_{\rm s}=PQ$, $N_t=144, N_{\rm{r}}=36, N_{\rm{RF}}^{\rm{t}}=N_{\rm{RF}}^{\rm{r}}=4, P=2$, Q=2.}
\label{figSprctralEfficiency_Qbits}
\end{figure}
In Fig. \ref{figSprctralEfficiency_Qbits}, the impact of the quantization bit on the spectral efficiency is presented, where $N_{\rm{t}}\times N_{\rm{r}} =144\times 36$, $N_{\rm{RF}}^{\rm{t}}=N_{\rm{RF}}^{\rm{r}}=4$, $P=2$ and $Q=2$. We observe that the spectral efficiencies of the NUQ-HYP-Full scheme are always higher than the UQ-OMP scheme with different numbers of quantization bits. Moreover, the smaller the number of quantization bits is, the larger the performance gap becomes. Specially, we also observe that the NUQ-HYP-Full scheme achieves similar spectral efficiencies as the UQ-OMP scheme, when the number of quantization bits is reduced by 1 (at least $12.5\%$ feedback overhead reduction for the given number of antennas). This phenomenon is in consistent with Proposition 1 presented in Section III.B. Furthermore,we observe that the number of quantization bits $b$ should satisfy $2^b\geq max({{N_{\rm t}},{N_{\rm r}}})$ to obtain good spectral efficiency for the UQ-OMP scheme that is based on the vector-by-vector UQ-based codebooks.

\begin{figure}[t]
\centering
\includegraphics[scale=0.65]{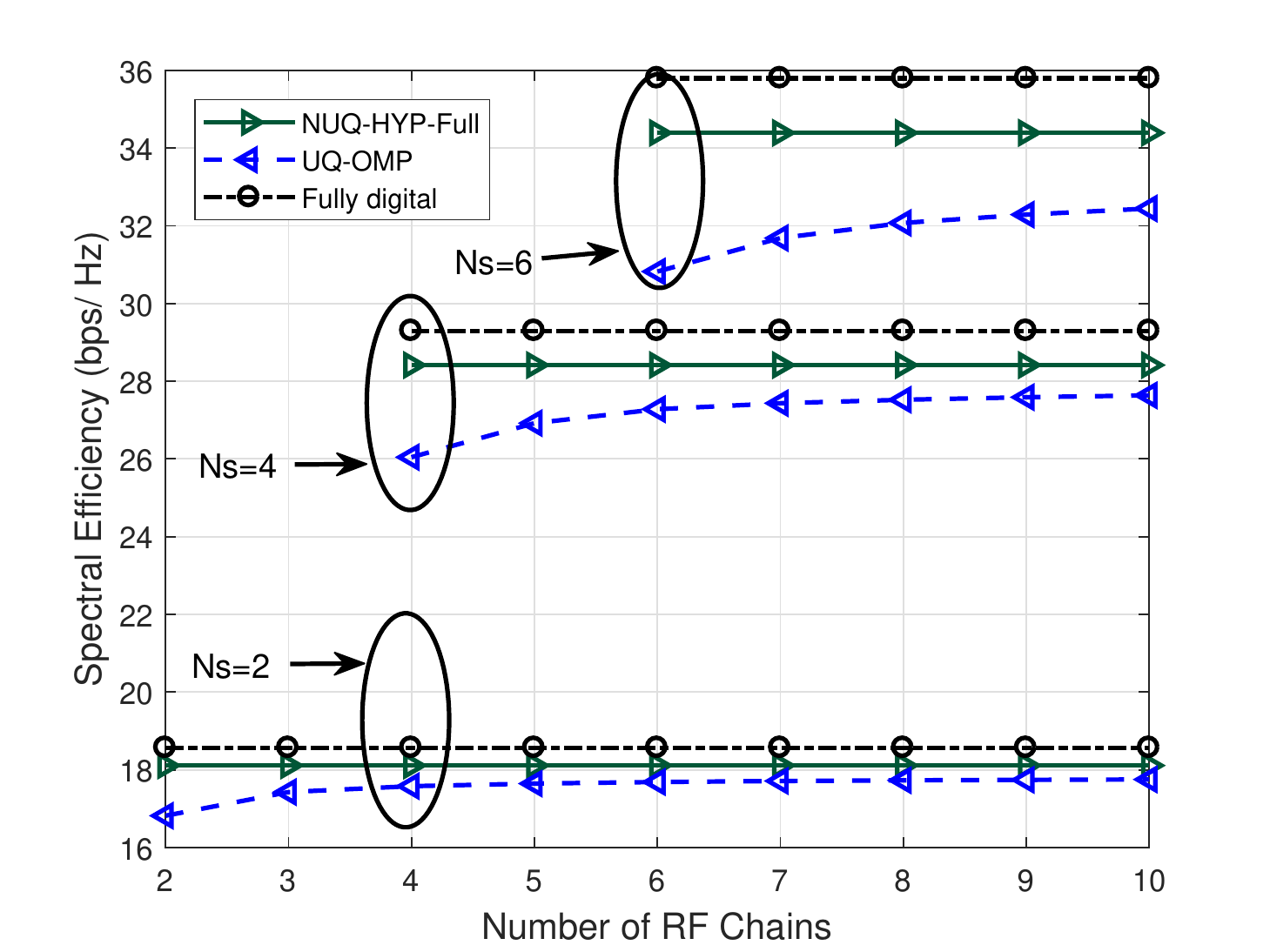}
\caption{Spectral efficiencies of NUQ-HYP-Full, OMP and fully digital precoding schemes with different numbers of RF chains, where $N_{\rm{RF}}^{\rm{t}}=N_{\rm{RF}}^{\rm{r}}, N_{\rm t}=144, N_{\rm{r}}=36, N_{\rm{s}}=PQ, Q=2$, SNR=0 dB, $b=8$.}
\label{figSprctralEfficiency_RF}
\end{figure}
Fig. \ref{figSprctralEfficiency_RF} shows the spectral efficiencies of different schemes with different numbers of RF chains, where $N_{\rm{t}}\times N_{\rm{r}} =144\times 36$ and the SNR is set to be 0 dB. We evaluate the cases when the number of subpaths $Q=2$ and the number of spatial lobes varies from 1 to 3. Since we only utilize $PQ$ RF chains to transmit and receiver signals, we observe that the spectral efficiencies of the proposed NUQ-HYP-Full scheme remain unchanged but are always higher than the spectral efficiencies achieved by the uniform quantization based OMP scheme.

\begin{figure}[t]
\centering
\includegraphics[scale=0.65]{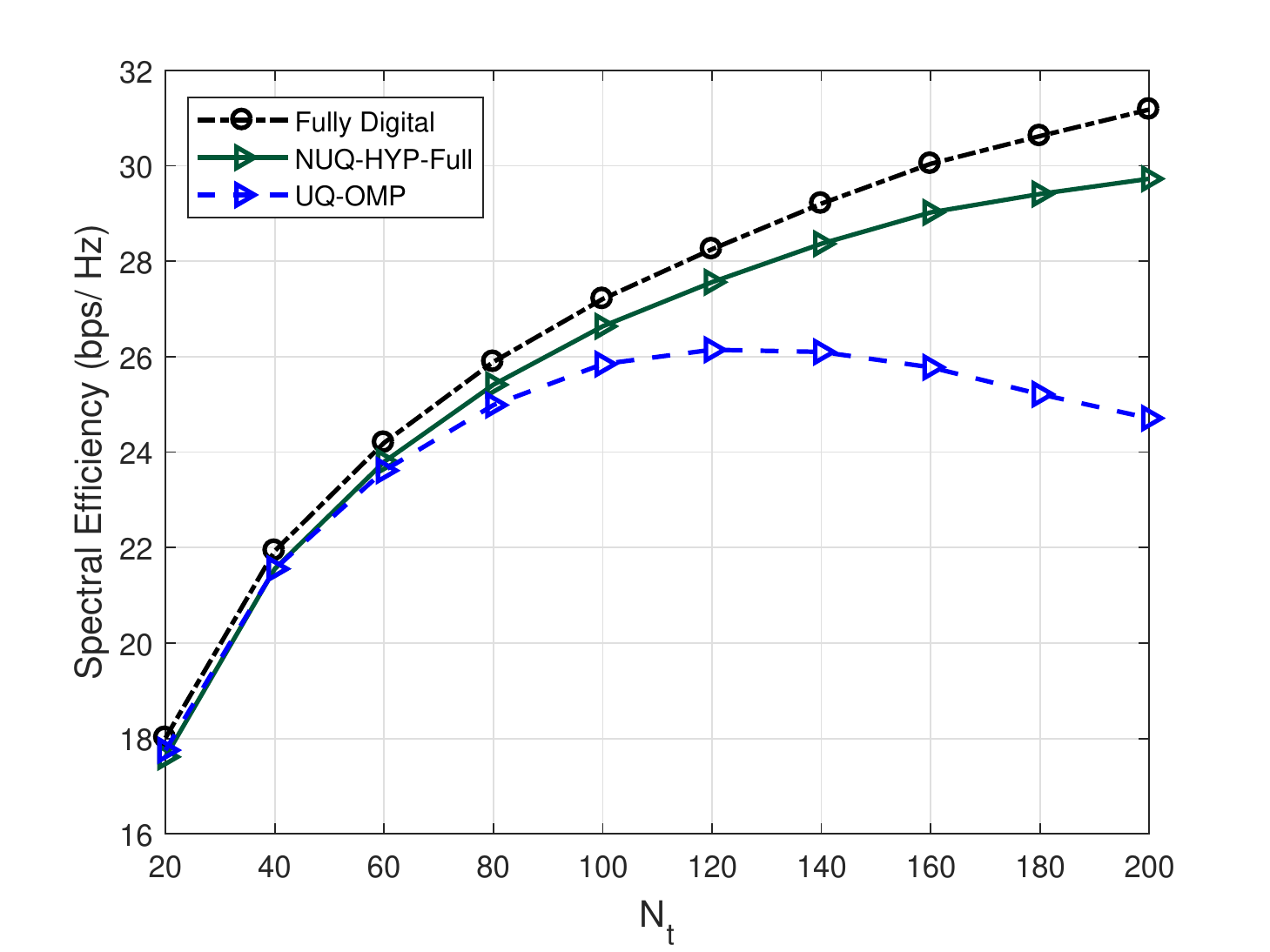}
\caption{Spectral efficiencies of NUQ-HYP-Full, OMP and fully digital precoding schemes with different numbers of transmitter's antennas $N_{\rm t}$, where $N_{\rm{r}}=36, N_{\rm{RF}}^{\rm{t}}=N_{\rm{RF}}^{\rm{r}}=N_{\rm{s}}=PQ, P=Q=2,$ SNR=0 dB, $b=8$.}
\label{figSprctralEfficiency_Nbs}
\end{figure}

\begin{figure}[t]
\centering
\includegraphics[scale=0.65]{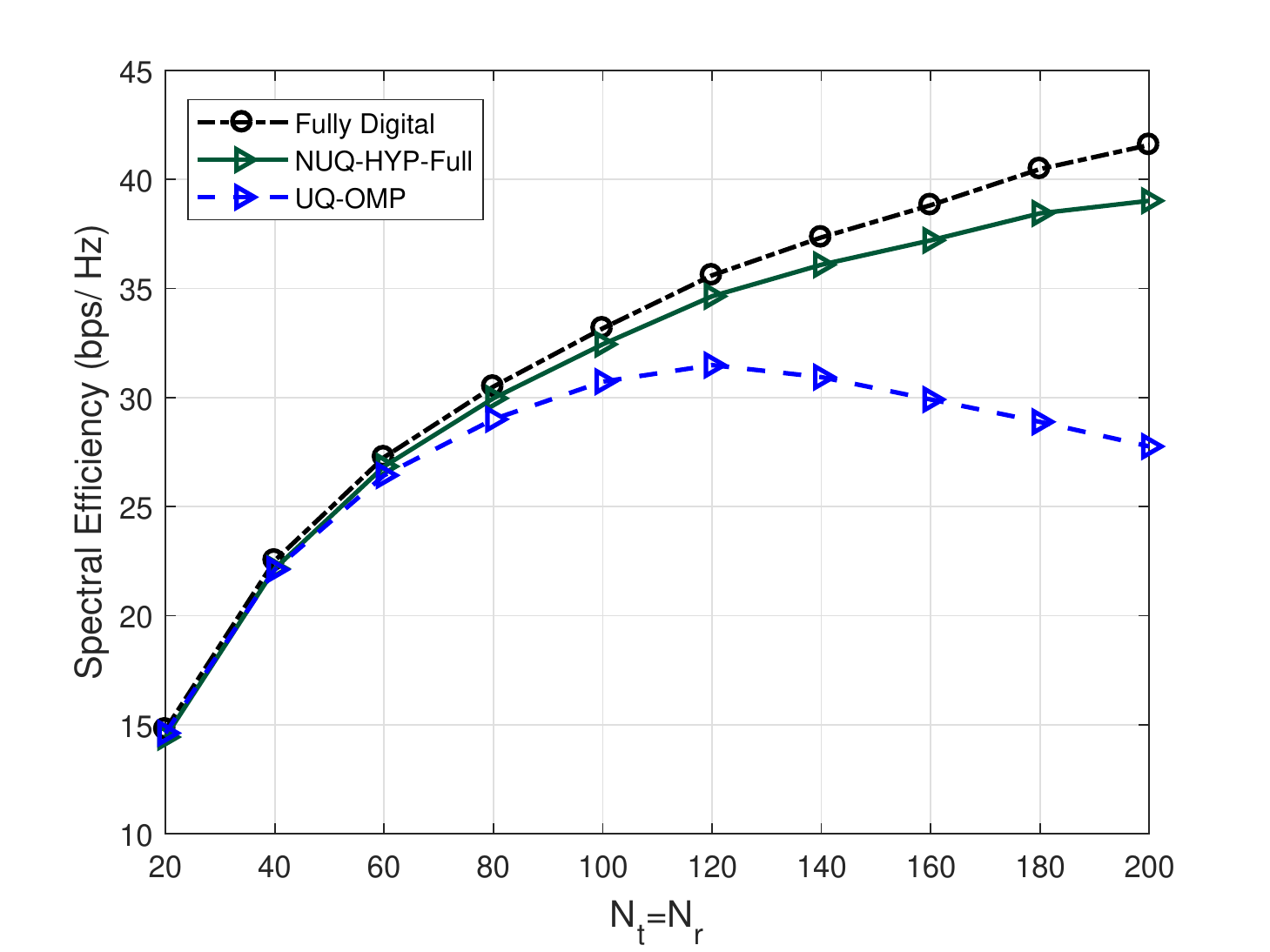}
\caption{Spectral efficiencies of NUQ-HYP-Full, OMP and fully digital precoding schemes with different numbers of $N_{\rm t}$ and $N_{\rm{r}}$ where $N_{\rm t}=N_{\rm{r}}$, $N_{\rm{RF}}^{\rm{t}}=N_{\rm{RF}}^{\rm{r}}=N_{\rm{s}}=PQ, P=Q=2,$ SNR=0 dB, $b=8$.}
\label{figSprctralEfficiency_NbsNms}
\end{figure}
Fig. \ref{figSprctralEfficiency_Nbs} and Fig. \ref{figSprctralEfficiency_NbsNms} show the spectral efficiencies of different schemes with different numbers antennas at the transmitter and both the transmitter and receiver, respectively, where $N_{\rm{RF}}^{\rm{t}}=N_{\rm{RF}}^{\rm{r}}=N_{\rm{s}}=PQ, P=Q=2,$ SNR=0 dB and the number of the quantization bit is set to be $b=8$. We observe that the proposed NUQ-HYP-Full scheme always outperforms the UQ-OMP scheme and achieves more than $95\%$ of the spectral efficiency achieved by the fully digital precoding scheme at even very large number of antennas. Moreover, we also observe that when the number of antennas becomes very large, the spectral efficiency of UQ-OMP scheme decreases, since the quantization bit $b$ is relatively not large enough.

\subsection{Sub-connected hybrid precoding structure}
For the sub-connected structure, the number of antennas are set as $N_{\rm{t}}\times N_{\rm{r}} =144\times 36$. In addition, the number of RF chains $N_{\rm{RF}}$ is set equal to $N_{\rm s}$, which is actually the worst case since $N_{\rm{RF}}$ should satisfy $N_{\rm{RF}}\geq N_{\rm s}$ to ensure the system could simultaneously transmit $N_{\rm s}$ data streams.

\begin{figure}[t]
\centering
\includegraphics[scale=0.65]{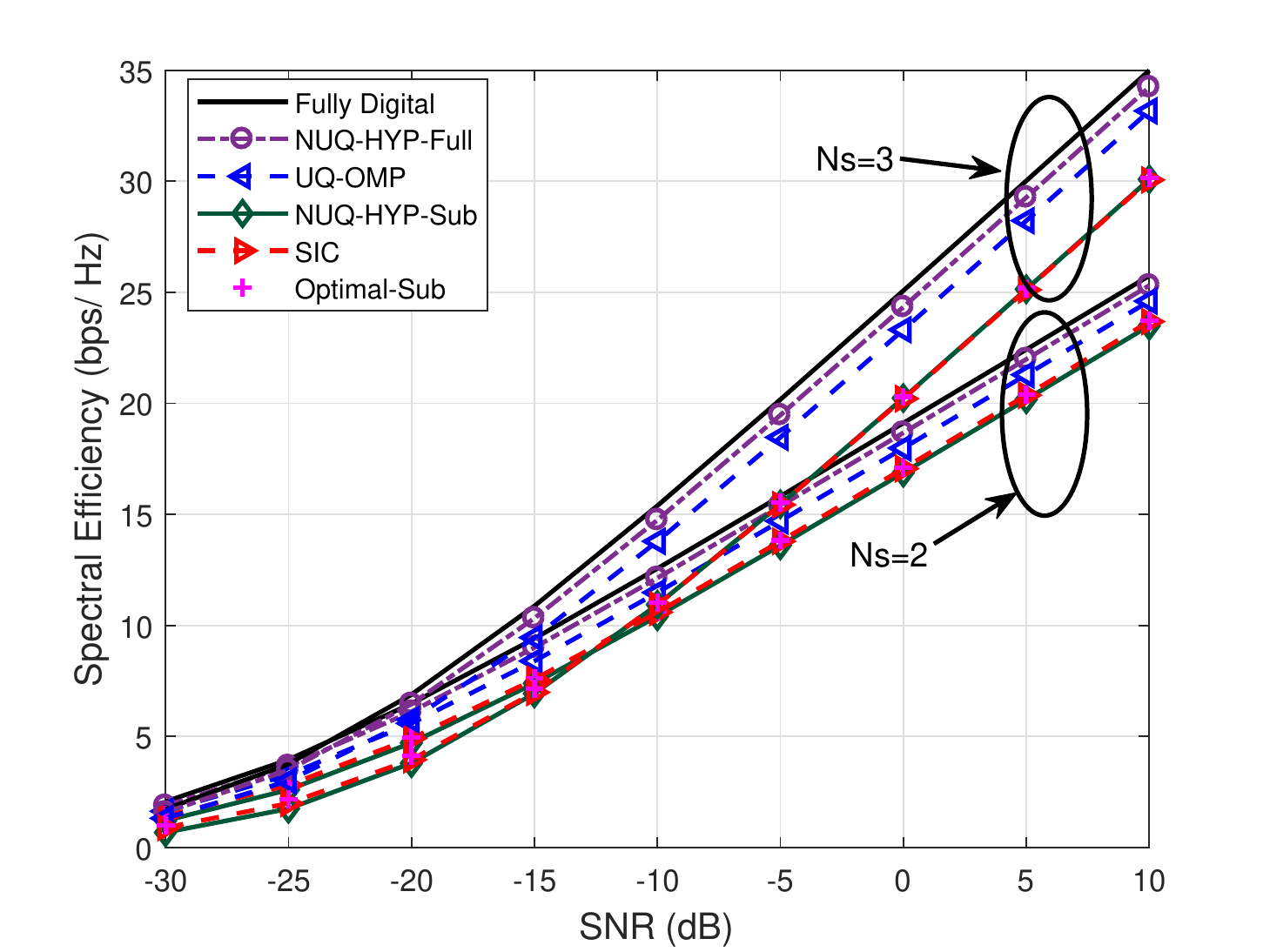}
\caption{Spectral efficiencies comparison with different numbers of spatial lobes for the sub-connected structure, where $N_{\rm t}=144, N_{\rm{r}}=36, N_{\rm{RF}}=N_{\rm{s}}=PQ, Q=1, b=6$.}
\label{figSprctralEfficiency_Sub_23-1}
\end{figure}

\begin{figure}[t]
\centering
\includegraphics[scale=0.65]{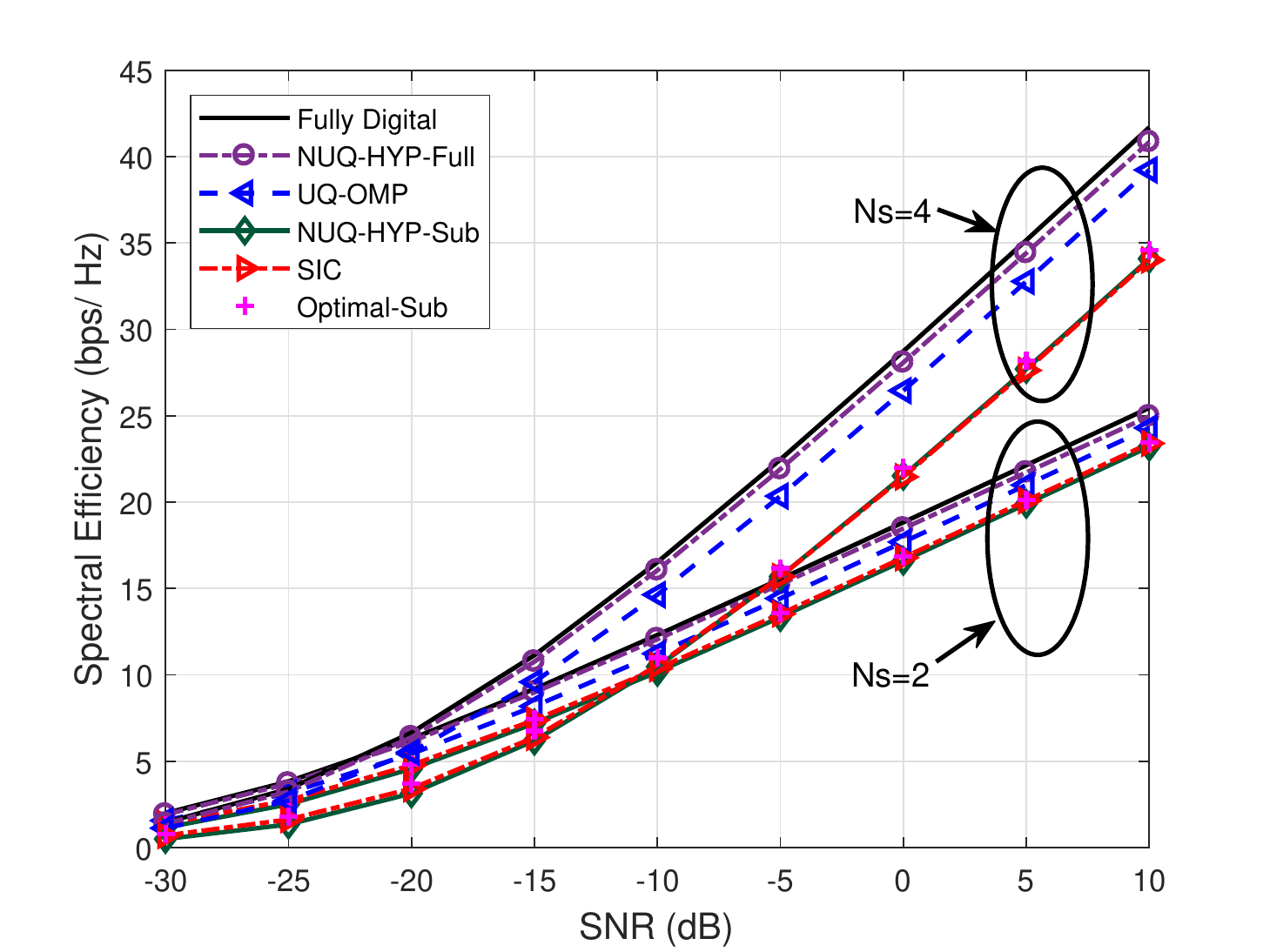}
\caption{Spectral efficiencies comparison with different numbers of subpaths for the sub-connected structure, where $N_{\rm t}=144, N_{\rm{r}}=36, N_{\rm{RF}}=N_{\rm{s}}=PQ, P=2, b=6$.}
\label{figSprctralEfficiency_Sub_2-12}
\end{figure}
Fig. \ref{figSprctralEfficiency_Sub_23-1} and Fig. \ref{figSprctralEfficiency_Sub_2-12} compare the spectral efficiencies of NUQ-HYP-Sub scheme, SIC-based hybrid precoding
scheme, optimal unconstrained precoding scheme for the sub-connected structure (marked as Optimal-Sub), UQ-OMP scheme and fully digital precoding scheme with different numbers of spatial lobes and subpaths, respectively. The optimal unconstrained precoding scheme for the sub-connected structure was detailedly described in \cite{16Gao2016}. We observe that the proposed NUQ-HYP-Sub scheme achieves similar spectral efficiencies as the SIC and Optimal-Sub schemes, and the performance gaps are less than $1\%$. In addition, Fig. \ref{figSprctralEfficiency_Sub_23-1} and Fig. \ref{figSprctralEfficiency_Sub_2-12} also show that, the proposed NUQ-HYP-Sub scheme achieves more than $87\%$ of the spectral efficiencies achieved by the UQ-OMP scheme for all cases.

\begin{figure}[t]
\centering
\includegraphics[scale=0.65]{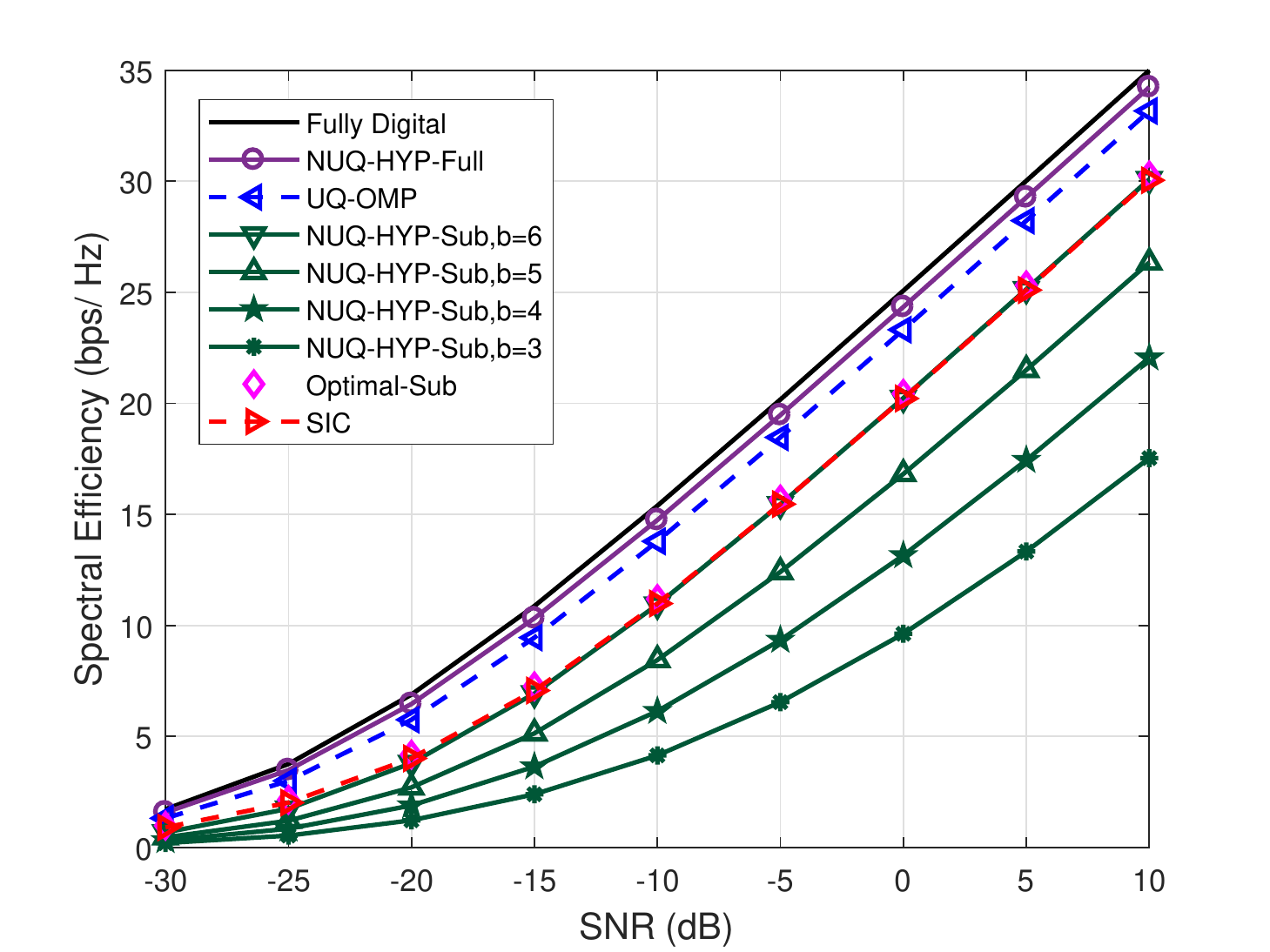}
\caption{Spectral efficiencies comparison with different quantization bits for the sub-connected structure, where $N_{\rm t}=144, N_{\rm{r}}=36, N_{\rm{RF}}=N_{\rm{s}}=PQ, P=3, Q=1$.}
\label{figSprctralEfficiency_Sub_Qbit}
\end{figure}
In Fig. \ref{figSprctralEfficiency_Sub_Qbit}, the impact of the quantization bit on the spectral efficiency for the sub-connected structure is shown, where the number of the spatial lobes and subpaths are set as $P=3$ and $Q=1$, respectively. We observe that as the number of quantization bits increases, the spectral efficiency gaps between the NUQ-HYP-Sub and Optimal-Sub schemes become smaller. Furthermore, we also observe that, to obtain the near-optimal spectral efficiency, the required number of quantization bits for the proposed NUQ-HYP-Sub scheme is at least 6. However, the UQ-OMP scheme needs at least 8 quantization bits to maintain the near-optimal performance at the full-connected structure. This is because only $N_{\rm t}^{{\rm sub}}=N_{\rm t}/{N_{{\rm RF}}}$ and $N_{\rm r}^{{\rm sub}}=N_{\rm r}/{N_{{\rm RF}}}$ antennas are utilized to transmit and receive the directional beams at the sub-connected structure, respectively. Therefore, we only need to make $2^b$ approach $max({N_{\rm t}^{{\rm sub}}, N_{\rm r}^{{\rm sub}}})$ to obtain good performance. This is a new advantage for the sub-connected structure in reducing the feedback overhead.

\section{Conclusions}
In this paper, we proposed the NUQ-HYP-Full scheme and the NUQ-HYO-Sub scheme for the full-connected and the sub-connected structures in millimeter wave MIMO systems, respectively. The key idea of the proposed schemes is that the quantization bits are non-uniformly mapped to different coverage angles, according to the sparseness property of the millimeter wave in the angular domain. Both of the proposed schemes achieve at least $12.5\%$ feedback overhead reduction for a system with 144/36 transmitting/receiving antennas. Simulation results demonstrated that the proposed NUQ-HYP-Full scheme for the full-connected structure exhibits similar spectral efficiency as the fully digital precoding scheme and outperforms the UQ-OMP scheme. 
Simulation results also showed that the proposed NUQ-HYP-Sub scheme for the sub-connected structure achieves similar spectral efficiency as the optimal unconstrained precoding scheme. Furthermore, we also observed that, the required number of quantization bits in the sub-connected structure to obtain near-optimal spectral efficiency was smaller than that in the full-connected structure, which provided a new insight to study low feedback overhead hybrid precoding schemes in millimeter wave MIMO systems.
Our future work will focus on the wideband and time-varying millimeter wave channel scenarios, where the delay and doppler shift become key points to design hybrid precoding matrices.

\end{document}